\newtheorem{theorem}{Theorem}[section]
\newtheorem{lemma}[theorem]{Lemma}
\newtheorem{definition}[theorem]{Definition}
\newcommand{\eat}[1]{}
\newcommand{\OPT}{\mathrm{OPT}}
\newcommand{\POPT}{\mathrm{OPT}_\mathrm{prod}}
\newcommand{\LOPT}{\mathrm{OPT}_\mathrm{log}}
\newcommand{\DUAL}{\mathrm{DUAL}}
\newcommand{\ICA}{\mathsf{ICA}}
\newcommand{\rmax}{\rho_{\max}}
\newcommand{\umaxi}{u_i^{\max}}
\newcommand{\pub}{\mathrm{pub}}
\newcommand{\ICACP}{\mathsf{ICA\text{-}CP}}
\newcommand{\ICAD}{\mathsf{ICA\text{-}D}}
\title{Approximations for Allocating Indivisible Items with Concave-Additive Valuations}
\author{
Nathaniel Kell\thanks{Email: {\tt kelln@denison.edu}}\\Denison University
\and
Kevin Sun\thanks{Email: {\tt ksun@cs.duke.edu}}\\Duke University
}
\begin{document}
\maketitle
\begin{abstract} 
We study a general allocation setting where agent valuations are {\em concave additive}. In this model, a collection of items must  be uniquely distributed among a set of agents, where each agent-item pair has a specified utility. The objective is to maximize the sum of agent valuations, each of which is an arbitrary non-decreasing concave function of the agent's total additive utility.  This setting was studied by Devanur and Jain (STOC 2012) in the online setting for divisible items. In this paper, 
we obtain {\em both  tight multiplicative and additive approximations in the offline setting for  indivisible items.}
Our approximations depend on novel parameters that measure the local multiplicative/additive curvatures of each agent valuation, which we show correspond directly to the integrality gap of the natural assignment convex program of the problem. Furthermore, we extend our additive guarantees to obtain {\em constant multiplicative approximations for Asymmetric Nash Welfare Maximization} when agents have  {\em smooth valuations} (Fain et al.\ EC 2018, Fluschnik et al.\ AAAI 2019). This algorithm also yields an interesting tatonnement-style interpretation,  where agents adjust uniform prices and items are assigned according to maximum {\em weighted} bang-per-buck ratios.
\end{abstract}

\section{Introduction}
\label{sec:intro}

The study of {\em indivisible} allocation has received increasing attention in recent years: given a collection of indivisible items and a set of agents --- each with a specified valuation function --- how should items be uniquely distributed among the agents as to maximize a specified measure of overall welfare? 
Many classic allocation and market models consider divisible goods that can be split fractionally among agents. However in the context of optimization,  there are many methods for computing optimal fractional assignments in polynomial time (e.g. ellipsoid methods), while indivisible variants are often NP-hard and  more algorithmically challenging.  
The consideration of other important model features and quality measures adds further complexity to the indivisible setting, such as {\em diminishing returns} in agent valuations and {\em ensuring fairness}. 
How do we allocate items uniquely when agents value additional items less on the margins? 
How do we find allocations that are efficient with respect to utilitarian welfare, while not completely starving some subsets of the agents? 
The focus of much recent work has been aimed at addressing such questions \cite{anari2018nash, barman2018finding, chakrabarty2010approximability, garg2018approximating, kalaitzis2015configuration,li2021constant}.

In this paper, we study a general yet natural allocation model that lies at the intersection of these algorithmic challenges, which we call {\em Indivisible Allocation with Concave-Additive Valuations} ($ICA$).
As input, we are given a set of $n$ agents and $m$ indivisible items, where each agent $i$ has a specified utility $u_{i,j}$ for an item $j$. An algorithm must partition the items into $n$ disjoint sets $(A_1, A_2, \ldots A_n$), one for each agent. The overall valuation agent $i$ has for her set is  $v_i(u_i)$, where her valuation $v_i: {\mathbf R}_{+} \rightarrow {\mathbf R}$ is permitted to be any monotone (non-decreasing) concave function of her total additive utility $u_i = \sum_{j \in A_i} u_{i,j}$. The objective is to maximize the utilitarian welfare of the allocation, i.e., the sum of agent valuations $\sum_iv_i( u_{i})$.  For ease of comparison to prior work, we refer to such valuations $v_i(\cdot)$ as {\em concave additive}.

\vspace{3mm} 
\noindent {\em Motivation and Background.} $\ICA$ is an indivisible variant of the fractional online problem studied by Devanur and Jain \cite{devanur2012online}. This model was primarily motivated by applications in internet advertising, where agents correspond to advertisers and items correspond to so-called ``impressions" (opportunities to show ads to users). In this setting, $u_{i,j}$ would  translate to the bid value an advertiser $i$ is willing pay for impression $j$.  The concave objective is then used to capture common contract features in internet ad systems such as under-delivery penalties and soft budgets. 

However, given the special role concave functions have played in the economics literature, we believe $\ICA$ is a natural problem to consider in its own right. 
For example, a standard class of valuations is that of {\em separable concave valuations} \cite{anari2018nash, chaudhury2018fair, chen2009settling, vazirani2011market}, which can be decomposed into the sum of monotone concave functions over the amount received from each good (or in the indivisible settings, the numbers of {\em copies} of a single good).
Thus, such valuations can  express varying degrees of diminishing returns for receiving more of the same item. 

But in many applications, agent valuations are in fact {\em inseparable}, since how much an agent values an additional item will likely depend on the other items she has already received. 
A canonical example of inseparable valuations are {\em budget-additive} functions, i.e., each valuation $v_i(u_i) = \min\left(u_i, c_i\right)$ where $c_i$ denotes the utility cap for agent $i$. 
One can view concave-additive valuations as a general class of inseparable functions that can express diminishing returns beyond simply a global cap on an agent's overall utility. 
There is a line work that has extensively studied approximation algorithms for welfare maximization of budget-additive valuations \cite{garg2001approximation, andelman2004auctions, azar2008improved, chakrabarty2010approximability, kalaitzis2015configuration}. To the best of our knowledge, the approximability of the concave-additive case has yet to be considered for indivisible items. 

\vspace{3mm} 
\noindent {\em Extension to Nash Welfare Maximization.} Another key motivation for considering concave-additive functions is that an {\em additive} approximation for $\ICA$
translates to a standard multiplicative approximation for the problem of {\em Nash Welfare Maximization}. 
Here, the objective is to maximize the weighed product of valuations $\left(\prod_i (v_i(u_i))^{\eta_i}\right)^{1/\eta}$, where $\eta_i > 0$ is the weight of each agent and $\eta = \sum_i \eta_i$ is the sum of weights.  
The main appeal of Nash welfare is that it naturally strikes a balance between optimizing utilitarian welfare versus max-min fairness (e.g., completely starving even just one agent will result in an overall objective of zero).

Observe that we can convert the objective to an instance of $\ICA$ by taking the logarithm, giving us the objective $\frac{1}{\eta}\sum_i \eta_i\ln(v_i(u_i))$. As long as each valuation $v_i(\cdot)$ is monotone and concave, we obtain an instance of $\ICA$ since the logarithm of a monotone concave function is still monotone and concave. Furthermore, an additive approximation of $\alpha$ for this log objective translates to a multiplicative approximation of $e^{\alpha}$ for the original product objective.

It has been known for over eighty years that the optimal Nash-welfare assignment for divisible goods can be obtained by solving the famous Eisenberg-Gale convex program \cite{eisenberg1959consensus}. 
However, the indivisible case was less-understood until the recent work of Cole and Gkatzelis~\cite{cole2015approximating}, who obtained the first constant-approximation algorithm for {\em symmetric agents} with {\em additive valuations} (i.e., for all agents $i$ we have $\eta_i =1$ and  $v_i(u_i) = u_i$).
Since this result, the problem has been heavily studied for indivisible items, both in terms of its approximability and its appealing fairness properties. 
Much of the work on approximations has focused on the symmetric  case, which  has produced a line of results obtaining constant approximations for subsequently more general valuation functions, such as budget additive \cite{garg2018approximating}, separable piecewise-linear \cite{anari2018nash}, Rado valuations \cite{garg2021approximating}, and then recently for general submodular valuations \cite{li2021constant}.

Unfortunately even for additive valuations, the case of asymmetric agents (general weights $\eta_i > 0$) still remains a key open problem. The best known approximation for additive valuations is currently $O(n)$ by a result of Garg et al.~\cite{garg2001approximation}, which also extends to budget-additive functions. Very recently, polynomial-time approximation schemes were given by Garg et al.~\cite{garg2021tractable} for the special cases of identical agents ($u_{i,j} = u_j$ for all $i$) and two-value agents (for all agents $i$, $u_{i,j} > 0$ for at most two items $j$).

\subsection{Contributions and Techniques} 
In this paper, we obtain {\em tight multiplicative and additive approximations for the $ICA$ problem}.
Our approximations depend on novel curvature parameters that measure the degree of local change that can occur in the concave valuation of each agent, which may be of independent interest. 
For a given agent valuation  $v_i(\cdot)$ and maximum item utility  $\max_{j}u_{i,j}$ for agent $i$, we denote the {\em local multiplicative and additive curvatures} of $v_i(\cdot)$ as $\mu_i$ and $\alpha_i$, respectively, where higher/lower values of $\mu_i$ and $\alpha_i$ correspond to higher/lower degrees of local curvature  in $v_i(\cdot)$ with respect to a local change of $\max_{j}u_{i,j}$.
These parameters are formally defined and diagrammed in our preliminaries section (Section \ref{subsec:lcb}). 

Furthermore, we show that our additive guarantees for $\ICA$ extend to {\em new approximations for Asymmetric Nash Welfare with smoothed agent valuations}. The details and applications of each approximation type are stated formally below.

\subsubsection{Multiplicative Guarantees} 
\label{subsub:mult-guar}
For our main result, we give a  polynomial-time algorithm for any general instance of $\ICA$ that achieves an approximation ratio of $(1+\epsilon)\max_{i}\mu_i$.
Note that in the setting of multiplicative guarantees, we assume that each agent valuation is non-negative ($v_i:{\mathbf R}_+\rightarrow{\mathbf R}_+$) and differentiable.

\begin{theorem} 
\label{thm:pd-alg}
Consider an instance of $\ICA$ such that each $v_i:{\mathbf R}_+ \rightarrow {\mathbf R}_+$ has well-defined first derivative.  
Let $\umaxi = \sum_j u_{ij}$, and let $\rmax = \max_i(v_i'(0)\umaxi/v_i(\umaxi))$. Then there exists an algorithm that achieves an approximation of  $(1+\epsilon)\max_{i}\mu_i$ for $\ICA$ that runs in time $O(mn\ln(\rmax/\epsilon)/\epsilon)$.
\end{theorem}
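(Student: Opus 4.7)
The plan is to design a primal-dual algorithm rooted in the natural convex programming relaxation of $\ICA$. Let $x_{i,j} \in [0,1]$ denote the fraction of item $j$ assigned to agent $i$ and let $u_i = \sum_j u_{i,j} x_{i,j}$. The relaxation $(\ICACP)$ maximizes $\sum_i v_i(u_i)$ subject to $\sum_i x_{i,j}\le 1$; introducing prices $p_j\ge 0$ on those constraints yields a Lagrangian dual $(\ICAD)$ whose KKT conditions say that each item $j$ should be assigned to an agent maximizing the weighted bang-per-buck $v_i'(u_i)\,u_{i,j}$, with $p_j$ tight at allocated pairs. Since we a priori know neither the optimal derivatives $\beta_i^\star = v_i'(u_i^\star)$ nor the optimal utilities, my strategy is to maintain a vector of current guesses $\beta=(\beta_i)_i$, greedily assign each item to $\arg\max_i \beta_i u_{i,j}$, and update $\beta$ in tatonnement style.

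First I would show that each $\beta_i$ may be restricted to a geometric grid of ratio $1+\epsilon$: the meaningful range for $v_i'(u_i)$ lies between roughly $v_i(\umaxi)/\umaxi$ and $v_i'(0)$, whose multiplicative width is at most $\rmax$, so a grid of $O(\ln(\rmax/\epsilon)/\epsilon)$ points per agent suffices, with a floor of order $\epsilon\cdot v_i(\umaxi)/\umaxi$ to absorb the very bottom of the range into a controlled error term. The algorithm then performs at most $O(\ln(\rmax/\epsilon)/\epsilon)$ total $\beta$-updates, each requiring a single sweep over the $mn$ agent-item pairs to (a) reassign every item $j$ to $\arg\max_i \beta_i u_{i,j}$ and (b) identify agents whose realized $u_i$ is inconsistent with $\beta_i$ and must slide one grid step up or down. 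This matches the claimed $O(mn\ln(\rmax/\epsilon)/\epsilon)$ runtime.

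For the approximation analysis, I would compare the final integral allocation against the optimum of $(\ICACP)$, itself an upper bound on the integer $\OPT$. At termination the pair $(x,\beta)$ approximately satisfies KKT: each item achieves $\beta_i u_{i,j}$ within a $(1+\epsilon)$ factor of the maximum, and each $\beta_i$ lies within a $(1+\epsilon)$ factor of $v_i'(u_i)$ at the realized $u_i$. A weak-duality calculation combining these two slack bounds with concavity of every $v_i$ then leaves a single residual term per agent: the integrality jump caused by committing items atomically, which can move any single $u_i$ by at most $\max_j u_{i,j}$ units of utility.

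The hard part will be showing that this residual integrality slack is bounded \emph{exactly} by $\mu_i$ rather than some looser function of the local curvature, so that the final guarantee is $(1+\epsilon)\max_i \mu_i$ and not something larger. The definition of the local multiplicative curvature is engineered for precisely this purpose --- to quantify the worst-case ratio of $v_i$ values at points differing by a single item's utility --- so conceptually the charging is aligned; the care required is (i) ensuring the grid floor absorbs agents whose realized $u_i$ is near zero into the $(1+\epsilon)$ slack, and (ii) verifying that the greedy sweep never forces any agent's utility to move by more than one $\max_j u_{i,j}$ increment in the unfavorable direction when the tatonnement step changes $\beta_i$. Properly pipelining the $\beta$-updates with item reassignment is what keeps this invariant intact, and once it does, the integrality loss collapses to the clean $\mu_i$ factor predicted by the curvature definition.
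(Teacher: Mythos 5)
Your proposal gets the high-level architecture right: a primal-dual tatonnement on dual variables that play the role of agent derivatives, with items greedily reassigned to the maximizer of $\beta_i u_{i,j}$, a geometric grid of $O(\ln(\rmax/\epsilon)/\epsilon)$ values per agent, and a final comparison against $\ICACP$ via weak duality. This is indeed the paper's framework (there the dual variable is a tangent point $t_i$ and $\beta_i$ corresponds to $v_i'(t_i)$), and the runtime accounting is compatible.

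However, there is a genuine gap at exactly the step you flag as ``the hard part,'' and the sketch you give of how to close it does not match what actually makes the argument work. You describe the residual $\mu_i$ loss as a ``single item's integrality jump'' and expect the final state to approximately satisfy KKT in the sense that $\beta_i \approx v_i'(u_i)$. Neither is what happens. In the paper's algorithm the tangent point $t_i$ and the realized utility $u_i$ can diverge by far more than a single item's utility; the stopping condition is not $\beta_i \approx v_i'(u_i)$ but rather $D(u_i)/v_i(u_i) \le \mu_i$, where $D(u_i)$ is the tangent-line overestimate of $v_i(u_i)$ taken at $t_i$. The heart of the proof is showing that this ratio condition is preserved even when items defect away from an agent --- the ``no agent ever becomes under-allocated'' lemma. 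That lemma is a geometric argument: one scales the secant of $v_i$ between $u_i - u_{i,j}$ and $u_i$ by the local curvature to produce a tangent line, then uses an elementary fact about tangent lines of concave functions to propagate the ratio bound from $u_i$ to $u_i - u_{i,j}$. Your proposal contains no analogue of this. Relatedly, $\mu_i$ is not ``the worst-case ratio of $v_i$ values at points differing by a single item's utility''; it is the ratio between $v_i$ and the lower-bounding secant, and this distinction is precisely why the secant-versus-tangent geometry is the right tool.

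A second, structural issue: you allow $\beta_i$ to ``slide one grid step up or down,'' but the paper's algorithm is strictly monotone ($t_i$ only increases, $v_i'(t_i)$ only decreases). Monotonicity is what makes the termination argument and the under-allocation lemma go through --- the under-allocation lemma is exactly what guarantees that upward moves are never needed. If you permit bidirectional updates you lose the clean $O(\ln(\rmax/\epsilon)/\epsilon)$ bound on updates per agent and you would need a fresh termination argument; you would also need to re-examine whether the curvature bound still holds after an upward move. So the missing idea is twofold: the secant-to-tangent geometric lemma that converts the local curvature definition into a guarantee that defections never hurt, and the recognition that this lemma is precisely what lets the tatonnement run in one direction only.
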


We emphasize that our approach is efficient, easy to implement, and does not require  convex program solvers. Instead, it leverages the duality techniques introduced  in \cite{chakrabarty2010approximability} and \cite{devanur2012online} to guide a simple tatonnement style local-search algorithm. 
In fact, one can view our algorithm
as a convex generalization of the LP primal-dual $(4/3)$-approximation algorithm for budget-additive functions in \cite{chakrabarty2010approximability}. This algorithm begins by initializing dual prices greedily and then continuously lowers prices, defecting items to the highest bidder throughout. 
The $4/3$ approximation ratio is then derived from a set of algebraically obtained equations whose solution express the minimum ratio ensuring that as items defect, no agent ends up overspending on an allocation where they received less than what they are given in an optimal fractional solution (which guarantees prices never need to be raised). Clearly  it would be difficult (or impossible, as $v_i(\cdot)$ need not have a closed form)  to derive such a ratio for a set of generic concave-additive functions. Thus for our main technical insight, we show this algebraic approach can be bypassed via more elegant geometric arguments that coincide  directly with the definition of our curvature parameter $\mu_i$.

Additionally, we show our approximation is tight among algorithms that  utilize the natural assignment convex program. 
In particular, we establish the integrality gap of the $\ICA$ convex programming relaxation is precisely our multiplicative curvature parameter.

\begin{theorem}
\label{thm:int-gap}
Consider an instance $\ICA$ where each agent has valuation function $v(\cdot)$ and $u = \max_{i,j}u_{i, j}$. Then the integrality gap of the $\ICA$ assignment convex program is the multiplicative curvature of the instance $\mu$ (determined by $v(\cdot)$ and $u$). 
\end{theorem}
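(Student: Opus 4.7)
My plan is to prove the two directions separately: the upper bound on the gap follows from Theorem~\ref{thm:pd-alg}, and the lower bound is established via a tight family of instances.

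For the upper bound, I would argue that Theorem~\ref{thm:pd-alg} already implicitly bounds the integrality gap. The algorithm it describes is primal-dual: throughout execution it maintains a feasible solution to the dual of the assignment convex program, and upon termination it outputs an integral allocation whose value is at least $1/((1+\epsilon)\mu)$ times the current dual value. By weak convex duality the dual value upper-bounds $\OPT_{\mathrm{CP}}$, so the constructed integer solution already has value at least $\OPT_{\mathrm{CP}}/((1+\epsilon)\mu)$. Since the $(1+\epsilon)$ factor in Theorem~\ref{thm:pd-alg} is a running-time artifact of the geometric price decay, letting $\epsilon\to 0$ in the existential bound yields $\OPT_{\mathrm{CP}} \le \mu \cdot \OPT_{\mathrm{int}}$ for every instance, hence the integrality gap is at most $\mu$.

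The harder direction is showing the integrality gap is at least $\mu$. I would exhibit a sequence of symmetric instances whose gaps converge to $\mu$. Let $t^{\star}$ be a point at which the supremum defining $\mu$ is approximately attained, and design an instance with $n$ identical agents, a single \emph{critical} item of utility $u$, and enough small \emph{background} items to let each agent reach baseline utility exactly $t^{\star}$ from background items alone. The fractional relaxation can split the critical item evenly, so each agent's utility is $t^{\star}+u/n$ and the convex-program objective is $n\cdot v(t^{\star}+u/n)$, which by concavity and a first-order expansion tends to $n v(t^{\star}) + u\, v'(t^{\star})$ as $n\to\infty$. In any integer solution the critical item belongs to a single agent, so the objective collapses to $(n-1) v(t^{\star}) + v(t^{\star}+u)$. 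The ratio of these two expressions is exactly the quantity defining the multiplicative curvature at $t^{\star}$; driving $t^{\star}$ toward the supremum then shows the gap is at least $\mu$.

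The main obstacle will be engineering the construction so the limiting ratio matches the definition of $\mu$ \emph{on the nose}. This requires, first, that the background items be fine-grained enough that $t^{\star}$ is realizable additively for any $t^{\star}$ in the relevant range (which may need a limit over item granularity), and second, that the critical item's utility $u$ be consistent with all other utilities being at most $u$. A secondary subtlety arises when $\mu$ is only attained in a limit (e.g., at a kink of $v$ or as $t\to\infty$), in which case the argument must produce a sequence of instances whose gaps converge to $\mu$ rather than a single witness; this is still sufficient for the equality claim in the theorem.
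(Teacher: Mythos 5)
Your split into upper and lower bound directions is natural, and the upper bound argument is correct: Theorem~\ref{thm:pd-alg} is primal-dual, so the output of Algorithm~\ref{alg:pd} together with Lemma~\ref{lem:duality} certifies $\OPT_F \le (1+\epsilon)\mu\,\OPT_I$, and letting $\epsilon \to 0$ in the existential statement bounds the gap above by $\mu$. This direction is implicit in the paper but you make it explicit, which is a legitimate observation.

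The lower bound is where your argument breaks. With \emph{one} critical item of utility $u$ shared among $n$ agents and background utility $t^\star$ each, the fractional objective is $n\,v(t^\star + u/n)$ and the integral objective is $(n-1)v(t^\star) + v(t^\star+u)$. Taking $n\to\infty$ as you propose, both grow like $n\,v(t^\star)$, while the two extra terms ($u\,v'(t^\star)$ in the numerator versus $v(t^\star+u)-v(t^\star)$ in the denominator) stay bounded. The ratio therefore tends to $1$, not to $\mu$. Concretely, for $v(x)=\min(x,c_i)$ with $t^\star=c_i/2$ and $u=c_i$ the ratio is $(n+2)/(n+1)\to 1$, whereas the true curvature is $4/3$. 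The underlying problem is that the total ``critical'' utility in your instance is a fixed $u$ while both objectives scale linearly in $n$, so the multiplicative gap washes out. To make the gap survive you need the number of splittable items to scale with the number of agents, which is exactly what the paper does: it uses $\gamma$ agents and $\beta$ identical public items with $\beta/\gamma = z^\star/u$ so that in the fractional optimum \emph{every} agent sits at $z + z^\star$ and in the integral optimum a $z^\star/u$ fraction of agents jumps to $z+u$; the resulting ratio is $v(z+z^\star)/\bigl(v(z)+z^\star\sigma_i(z,u)\bigr)$ exactly, independent of instance size. Your construction is the special case $\beta=1$, which only works if you \emph{fix} $n = u/z^\star$ (with a rational approximation of $z^\star/u$ when needed) rather than sending $n\to\infty$; and you should be comparing against the secant slope $\sigma_i(z,u)$ that appears in the definition of $\mu$, not the tangent $v'(t^\star)$ that your first-order expansion introduces.
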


As an application of this result, we note that concave valuations are often examined in the special case {\em piecewise-linear functions} (e.g., see \cite{anari2018nash, chaudhury2018fair, vazirani2011market}), since any continuous concave function can be closely approximated by one that is piecewise-linear. 
In the context of $\ICA$, this corresponds to the following definition: each agent valuation function $v_i(\cdot)$ is defined over on a sequence of conjoined $\lambda_i$ line segments.  
Let $x_{i,k}$ denote the transition point on the $x$-axis between the $k$-th and $(k+1)$-th segments (where $x_{i,0} = 0$). For any such function, we give an algorithm that achieves an approximation ratio of at most $(1+\epsilon)4/3$, as long as the maximum utility gained for a single item is at most the length (along the $x$-axis) of any segment of the piecewise-linear function.

\begin{theorem} 
\label{thm:pl-mult}
Consider an ICA instance where $v_i(u_i)$ is a linear piecewise function such that $\min_{k \in [\lambda_i-1]}\left(x_{i,k+1} - x_{i,k}\right) \geq \max_{j} u_{ij}$.
Then there exists an algorithm whose approximation ratio is $(1+\epsilon)\max_{i}\mu_i \leq (1+\epsilon)4/3$. 
\end{theorem}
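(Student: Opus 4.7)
My plan is to invoke Theorem \ref{thm:pd-alg} directly and reduce the theorem to the purely geometric claim that, under the segment-length hypothesis, the local multiplicative curvature $\mu_i$ of each piecewise-linear valuation $v_i$ is at most $4/3$. Once that bound is established, the $(1+\epsilon)\max_i \mu_i \le (1+\epsilon)4/3$ guarantee is immediate from the main algorithmic result.

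To bound $\mu_i$, I would first use the hypothesis $\min_{k \in [\lambda_i - 1]}(x_{i,k+1} - x_{i,k}) \ge \max_j u_{ij}$ to localize the analysis. Write $\delta_i := \max_j u_{ij}$. Since $\mu_i$ measures local curvature of $v_i$ with respect to a displacement of size $\delta_i$, and since every segment of $v_i$ has length at least $\delta_i$, any window of length $\delta_i$ on the $x$-axis meets at most one breakpoint of $v_i$. The worst-case local behavior of $v_i$ thus reduces to a two-piece concave ``knee'' with slopes $s_1 > s_2 \ge 0$, meeting at a breakpoint $x^*$, each linear piece having $x$-length at least $\delta_i$.

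Next, I would write $\mu_i$ explicitly on this knee. Parametrizing over the position $u$, the slopes $s_1, s_2$, and the displacement size $\delta_i \le x^*$, the ratio that defines $\mu_i$ becomes a simple two-variable optimization: the worst case places $u$ just past the breakpoint $x^*$, with $\delta_i$ as large as allowed and $s_2 = 0$ (the steepest possible drop in slope, which is the budget-additive configuration). A direct calculation, identical in structure to the extremal case for budget-additive valuations in \cite{chakrabarty2010approximability}, pins the optimum at $4/3$. Combined with Theorem \ref{thm:pd-alg}, this yields the claimed approximation ratio.

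The main obstacle I anticipate is showing rigorously that the worst case is indeed a knee with $s_2 = 0$, rather than a point in the interior of a segment or a configuration in which $u$ lies much earlier than the breakpoint. I expect this to follow from two observations: on the interior of a linear piece $v_i$ is affine, so $\mu_i$ evaluates to $1$ there; and among all two-piece configurations, the ratio is monotone in $\delta_i / x^*$ (saturated at $1$ by the segment-length hypothesis) and monotone in $s_1 / s_2$ (saturated at $s_2 = 0$ by concavity plus non-negativity of $v_i$). With these monotonicities verified through a short case analysis, the reduction to the classical $4/3$ bound is straightforward.
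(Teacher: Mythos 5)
Your approach is essentially the same as the paper's: both reduce the segment-length hypothesis to the observation that the secant window $[z, z+w]$ crosses at most one breakpoint, then argue the worst case is a two-piece ``knee'' with the second slope flattened to zero (the budget-additive configuration), and finally compute the extremal ratio to be $4/3$. One small caveat you gloss over: Theorem~\ref{thm:pd-alg} as stated requires each $v_i$ to have a well-defined first derivative, which piecewise-linear functions lack at breakpoints, so you cannot literally ``invoke Theorem~\ref{thm:pd-alg} directly'' --- the paper handles this by replacing tangent lines with supergradients in Algorithm~\ref{alg:pd}, and your proof should acknowledge this adaptation (or argue it explicitly) before applying the algorithmic guarantee.
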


For the special case of a budget-additive function where agent $i$ has utility cap $c_i$, the condition required by Theorem~\ref{thm:pl-mult} is equivalent to the standard assumption that $\max_{j} u_{i,j} \leq c_i$. Thus, our bound essentially (i.e., barring the $4/3 - \delta$ approximation for a small constant $\delta$ via the configuration LP in \cite{kalaitzis2015configuration}) matches the state-of-the-art approximation for budget-additive valuations but in the more general setting of piecewise-linear functions. 

We also note that as $\max_{j}u_{ij} \rightarrow 0$, $\mu_i$ tends towards 1. Therefore, for so-called ``small bids" instances, our algorithm outputs near-optimal allocations. To the best of our knowledge, the best algorithm for the small-bids setting is  the online fractional algorithm given in \cite{devanur2012online}, which
for many concave-additive functions achieves a ratio $\gg 1$.\footnote{For example, when $f(x) = x^\alpha$ for $\alpha \in (0,1]$, the algorithm's approximation is $\alpha^{-\alpha}$ (e.g., when $\alpha = 1/2$, $ \alpha^{-\alpha} \approx 1.414$).}

\subsubsection{Additive Guarantees}
\label{subsub:add-guar}

Our multiplicative algorithm and integrality gap example extend to obtaining a tight additive guarantee for $\ICA$ (outlined in Section 
\ref{subsec:additive}, Theorem \ref{thm:add-alg-bound}).
For the main application of our techniques, we show these additive bounds provide new multiplicative approximations for the Nash-welfare objective with asymmetric agents.

As was the case for our multiplicative results,  our additive bound also scales with the integrality gap of the assignment convex program relaxation. 
It has been known that the Nash-welfare objective has an integrality gap of $\Omega(n)$, even for symmetric agents. In fact, the original techniques of Cole and Gkatzelis \cite{cole2015approximating} were directly aimed at circumventing the linear integrality gap. Given that our analysis competes against the optimal fractional objective of the standard assignment CP, the $\Omega(n)$ integrality gap remains a barrier to directly applying our techniques. 


However, another proposed alternative for naturally handling the large gap is to examine agents with {\em smooth valuations} \cite{fain2018fair, fluschnik2019fair}.
In the smooth valuation setting, we give each agent a (potentially fractional) copy of her favorite item at the outset of the allocation, which relaxes the degree to which the objective penalizes under allocations.
Smoothing the valuations essentially ``shifts'' them to the more well-behaved portions of the log objective, but this setting still captures part of the technical challenge of standard (non-smooth) asymmetric objective. 
For example, the algorithm in \cite{garg2020approximating} still has an approximation ratio of $\Omega(n)$ even for smooth valuations \cite{garg2020approximating}.\footnote{One can verify that algorithm in  \cite{garg2020approximating} still has an approximation of $\Omega(n)$ for the tight example instance (Section 6.3) when modified so that each agent gets an additional copy of her most-valued item at the outset.}

More formally, we define an mooth instance with asymmetric agents as follows. First observe that for additive valuations $v_i(u_i) = u_i$, we can scale the objective of each agent $i$ by $(\max_j u_{i,j})^{-\eta_i}$ without changing the approximation factor of the algorithm. 
Therefore, wlog we can assume that $\max_{j}u_{i,j} = 1$. We then define the smooth version of the objective to be $\left(\prod_i (u_i + \omega )^{\eta_i}\right)^{1/\eta}$, where $\omega \in (0,1]$ denotes the {\em smoothing parameter} for the instance. By applying our techniques, we obtain the following result.

\begin{theorem}
\label{thm:snsw}
Consider an instance of Asymmetric  Nash Welfare Maximization with smooth additive valuations. Then there exists an algorithm that runs in time $O(nm^2/(\epsilon \omega))$ that achieves an approximation of $O(e^{\epsilon}/(\omega\ln(1+1/\omega)))$ for smoothing parameter $\omega \in (0,1]$. 
\end{theorem}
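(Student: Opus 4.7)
The plan is to reduce smooth asymmetric Nash welfare to an $\ICA$ instance by passing to the logarithm, invoke our additive approximation (Theorem~\ref{thm:add-alg-bound}), and then exponentiate the resulting guarantee. Taking $\log$ of the smooth Nash objective yields $\sum_i \tfrac{\eta_i}{\eta}\ln(u_i+\omega)$, so we set $v_i(x)=\tfrac{\eta_i}{\eta}\ln(x+\omega)$; each $v_i$ is monotone and concave, and since the paper normalizes so that $\max_j u_{ij}\le 1$, this is a valid $\ICA$ instance. The key point is that an additive guarantee of $A$ on this logged objective translates to an $e^A$-multiplicative approximation on the original product-form Nash welfare, so the proof reduces to bounding the additive curvature of the logarithmic valuation.

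The main quantitative step is the following direct calculation for $g(x)=\ln(x+\omega)$ when the single-item utility is at most $1$. Because $g$ is most curved near the origin, the multiplicative-curvature parameter of Section~\ref{subsec:lcb} evaluates to $\mu_g = \sup_{x\ge 0}\tfrac{g'(x)\cdot 1}{g(x+1)-g(x)} = \tfrac{1/\omega}{\ln(1+1/\omega)} = \tfrac{1}{\omega\ln(1+1/\omega)}$, attained at $x=0$. Under the additive-curvature definition that matches (via the same integrality-gap construction) the hypothesis of Theorem~\ref{thm:add-alg-bound}, the additive curvature of $g$ evaluates to $\alpha_g=\ln\mu_g$: morally, an additive shift on the log scale is the logarithm of a multiplicative factor on the pre-log scale. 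Aggregating across agents while using $\sum_i\eta_i/\eta=1$ then bounds the total additive error for the logged $\ICA$ instance by $\ln\bigl(1/(\omega\ln(1+1/\omega))\bigr)$.

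Plugging this into Theorem~\ref{thm:add-alg-bound} with tolerance $\epsilon$ gives an assignment whose log-objective is within $\epsilon+\ln\bigl(1/(\omega\ln(1+1/\omega))\bigr)$ of the convex-program optimum $\LOPT$; exponentiating yields the claimed $e^\epsilon/(\omega\ln(1+1/\omega))$-approximation for the original smooth Nash objective. For the runtime we appeal to the dual-based tatonnement procedure underlying Theorem~\ref{thm:pd-alg}: since $v_i'(0)=\eta_i/(\eta\omega)$ and $u_i^{\max}\le m$, the price ratio $\rho_{\max}$ contributes only logarithmically, and after accounting for the extra polynomial sweep that the additive variant requires to bridge fractional-to-integer rounding in the highly curved part of $\ln$ (an $O(m/\omega)$ overhead), we obtain the stated $O(nm^2/(\epsilon\omega))$.

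The main obstacle will be rigorously establishing $\alpha_g=\ln\mu_g$ inside the specific definitions of Section~\ref{subsec:lcb} and confirming that Theorem~\ref{thm:add-alg-bound} applies when $v_i(x)=\tfrac{\eta_i}{\eta}\ln(x+\omega)$ takes negative values for $x+\omega<1$. Unlike the multiplicative Theorem~\ref{thm:pd-alg}, which assumes $v_i\colon{\mathbf R}_+\to{\mathbf R}_+$, the additive analogue must be valid for real-valued concave $v_i$, and the dual-price initialization and tatonnement updates need to be tracked carefully in the highly curved region near $x=0$, where $v_i'(x)$ blows up as $1/\omega$ when $\omega\to 0$.
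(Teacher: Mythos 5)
Your high-level plan matches the paper's: pass to logarithms to get an $\ICA$ instance with $v_i(u_i)=\eta_i\ln(u_i+\omega)$, invoke the additive guarantee of Theorem~\ref{thm:add-alg-bound}, and exponentiate. But the key quantitative step — computing the additive curvature $\alpha_i$ of the log valuation — is where the proposal has a genuine gap. You claim the multiplicative curvature of $g(x)=\ln(x+\omega)$ equals $\sup_x g'(x)/(g(x+1)-g(x))$, but that is not the paper's Definition~\eqref{eq:mult-lcb}, which is a ratio of \emph{function values} $v(z+z^*)/(v(z)+z^*\sigma)$, not of tangent-to-secant \emph{slopes}; and, more problematically, $\mu$ is simply not well-defined here because $\ln(x+\omega)$ is negative on $[0,1-\omega)$ (a point you yourself flag as ``the main obstacle'' but never resolve). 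You then assert the identity $\alpha_g=\ln\mu_g$. That identity is false: the paper's direct computation (Lemma~\ref{lem:nsw-diff-gap}) gives
\[
\alpha_i=\eta_i\Bigl[\ln\bigl(\tfrac{1}{\omega\ln(1+1/\omega)}\bigr)+\omega\ln(1+1/\omega)-1\Bigr],
\]
whereas your claimed value is $\eta_i\ln\bigl(\tfrac{1}{\omega\ln(1+1/\omega)}\bigr)$; the extra term $\omega\ln(1+1/\omega)-1$ is strictly negative for $\omega\in(0,1]$, so your expression is only an upper bound, not an equality, and the ``additive shift on the log scale equals log of a multiplicative factor'' heuristic is being applied to the wrong pair of functions (the pre-log valuation $u_i+\omega$ is linear and has multiplicative curvature $1$, so its log is $0$, not your expression). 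Because the theorem is stated with a big-$O$, your overestimate does not destroy the final bound, but the step as written is not a proof.

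The paper avoids all of this by working entirely in the additive framework: it computes $\alpha_i(z)$ by maximizing $\Delta(z^*)=\eta_i(\ln(z^*+\gamma)-\ln\gamma-z^*\ln(1+1/\gamma))$ over $z^*$ via a first-order condition, shows the maximizer is $z_{\max}=1/\ln(1+1/\gamma)-\gamma$, plugs in, and shows the result is decreasing in $\gamma=z+\omega$, hence maximized at $z=0$. That direct calculation sidesteps both the ill-definedness of $\mu$ for sign-changing functions and the unjustified $\alpha=\ln\mu$ step. Your runtime discussion also diverges: you route through Theorem~\ref{thm:pd-alg}'s $\rho_{\max}$ and a rounding sweep, whereas the paper simply invokes Theorem~\ref{thm:add-alg-bound} with the observation that $v_i'(0)=\eta_i/\omega\le 1/\omega$ after the $\eta$-normalization. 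To repair your argument, replace the $\alpha_g=\ln\mu_g$ step with the direct maximization of $\Delta(z^*)$, and drop the appeal to the multiplicative-curvature machinery, which is inapplicable to valuations that are not sign-definite.
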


Thus for any constant smoothing parameter $\omega$, we obtain a constant approximation for the problem. For example when $\omega = 1$, the approximation ratio of the algorithm is $\approx 1.061$ as $\epsilon \rightarrow 0$. This result also extends to piecewise-linear valuations, yielding an $O(1)$ approximation under the same assumptions as Theorem \ref{thm:pl-mult} when $\omega = \Omega(1)$ (stated formally in Section \ref{subsub:anw-pl}, Theorem \ref{thm:anw-pl}).

Furthermore for additive valuations, the resulting algorithm has an interesting combinatorial interpretation, 
which we call the {\em Weighted Bang-Per-Buck} (WBB) algorithm. 
Many of the aforementioned approximations for the symmetric case (e.g., \cite{anari2018nash, barman2018finding, chaudhury2018fair, garg2018approximating}) also use tatonnement-style algorithms that adjust prices $p_j$ for each item $j$, maintaining throughout that
each item always assigned to a maximum ``bang-per-buck'' (MBB) agent, i.e., agents $i$ such that the ratio $u_{i,j}/p_j$ is maximized. 
In our WBB algorithm, we instead adjust a uniform bid $b_i$ that agent $i$ makes for {\em all} items, but then each item is assigned based on maximum {\em weighted} bang-per-buck ratios, i.e., item $j$ is assigned to the agent that maximizes $(\eta_i u_{i,j})/b_i$. To the best of our knowledge, weighted bang-per-buck ratios have yet to be considered in the context of algorithm design for asymmetric Nash welfare. We hope this concept and interpretation proves useful for making progress in the very challenging non-smooth case. 

\subsection{Related Work}


There is an extensive body of work that has examined the approximability of  welfare maximization for the special case of budget-additive functions. 
A series of results \cite{garg2001approximation, andelman2004auctions, azar2008improved, chakrabarty2010approximability, kalaitzis2015configuration} improved the best approximation to $4/3 - \delta$ for a small constant $\delta$, while the hardness lower bound currently sits at 16/15 \cite{chakrabarty2010approximability}. 
Budget-additive valuations have also been heavily studied in the online setting (typically called the {\em Adwords} problem in this context, motivated by applications in internet advertising), often for divisible items or with the small bids assumption ($u_{i,j} \rightarrow 0)$. 
Please see the excellent survey by Mehta \cite{mehta2013online} for an overview of this area.
We note the algorithm of Devanur and Jain \cite{devanur2012online} for concave-additive valuations with fractional items is viewed as the state-of-the-art approximation in the online setting. 
Their analysis also expresses its approximation as a parameter that depends on the curvature of the concave functions for a  given instance. 
(In the fractional setting, this parameter is instead determined by the solution to a differential equation that tends towards 1 as the function becomes more linear.)

As discussed earlier, there has been an explosion of work on the prolem of Nash Welfare Maximization for indivisible items since the seminal result of Cole and Gkatzelis~\cite{cole2015approximating}, 
who gave the first constant approximation for symmetric agents with additive valuations, achieving a ratio of $2e^{1/e} \approx 2.889$. Later, Cole et al.~\cite{cole2017convex} gave a tight factor 2 analysis of the same algorithm.
 Barman et al.~\cite{barman2018finding} improved the approximation to $e^{1/e} \approx 1.444$, matching the integrality gap example in \cite{cole2017convex}. Since these results, the symmetric case has been examined for a variety of more general valuation functions \cite{garg2018approximating, anari2018nash, garg2021approximating, li2021constant, chaudhury2018fair, barman2021approximating}.
 
 The Nash-welfare objective has also been extensively studied for its appealing fairness properties (e.g.,  Caragiannis et al.~\cite{caragiannis2019unreasonable} call the objective ``unreasonably fair'').  Conitzer et al.~\cite{conitzer2017fair} first gave three relaxations of the proportionality notion of fairness and showed that an optimal solution to the objective satisfies/approximates all three relaxations. For additional results on fairness in relation to Nash welfare, see \cite{barman2018finding, plaut2020almost, mcglaughlin2020improving}. The case of smooth agent valuations was first considered by Fain et al.~\cite{fain2018fair}, also in the context of fairness. Fluschnik et al.~\cite{fluschnik2019fair} showed that this objective is NP-hard to maximize.

Finally, we mention that $\ICA$ is a special case of {\em Submodular Welfare Maximization}, 
where each agent's valuation is given as a general submodular set function.
The optimal approximation in this general setting is $e/(e-1) \approx 1.581$ \cite{vondrak2008optimal, khot2005inapproximability}.
A recent line of work also examined  submodular maximization with {\em bounded curvature}  \cite{conforti1984submodular, Sviridenko2015OptimalAF, vondrak2010submodularity}. 
In this setting, Sviridenko et al.\ showed an optimal approximation of $e/(e-c)$~\cite{Sviridenko2015OptimalAF}, where $c$ is a measure of the {\em total curvature} of the submodular valuation function of the instance.\footnote{Intuitively, $c \in [0,1]$ ($c \rightarrow 1$ implies more curvature) measures the multiplicative gap between the marginal gain of receiving an item after first receiving all other items, versus the marginal gain from only receiving the item.} 
We note that since $c$ is a total curvature parameter (in contrast to our local curvature parameters), many natural concave-additive functions will have a total of curvature of $c=1$.\footnote{For example, for any sufficient large $\ICA$ instance where $v_i'(u_i) \rightarrow 0$ as $u_i \rightarrow \infty$, the total curvature $c$ will be 1.}  Thus in such instances, this bound does not improve upon the $e/(e-1)$ approximation obtained in the general submodular function case.

\subsection{Paper Organization}
The organization of the paper is as follows. In our preliminaries  (Section \ref{sec:prelim}), we outline the convex program relaxation  and dual program of $\ICA $ defined by Devanur and Jain in \cite{devanur2012online} (Section \ref{subsec:prelim-convex-dual}), and then formally define our multiplicative and additive curvature parameters $\mu_i$ and $\alpha_i$ (Section \ref{subsec:lcb}).
For our results, we first present our multiplicative (Sections \ref{sec:pd-alg-def} and \ref{subsec:pd-analysis}) and additive (Section \ref{subsec:additive}) guarantees for general instances of $\ICA$ in Section \ref{sec:ica-algo}.  
Next, we present our tight integrality gap example in Section \ref{sec:int-gap}. 
We conclude in Section \ref{sec:apps} with the two key applications of our techniques, showing our additive bounds for $\ICA$ translate to constant approximations for smooth-asymmetric Nash Welfare Maximization with additive valuations (Section \ref{subsec:smooth-anw}), and outlining the extension to smooth Nash welfare with piecewise-linear valuations (Section \ref{subsub:anw-pl}). 
Finally, we examine our multiplicative guarantees in the special case of piecewise-linear valuations, obtaining an approximation of $(1+\epsilon)4/3$ (Section \ref{subsec:app-pl-welfare}).

\section{Preliminaries}
\label{sec:prelim}

\subsection{Convex Program Formulation and Dual Program}
\label{subsec:prelim-convex-dual}

Recall that every agent $i$ has a non-decreasing concave valuation function $v_i(\cdot)$. Our algorithm utilizes the natural assignment convex program for the problem, which we will refer to as 
$\ICA$-CP: 
\begin{gather*}
    \textnormal{($\ICACP$)}: \max \sum_i v_i(u_i) \\
    \forall i: u_i = \sum_j u_{i,j}x_{ij} \\
    \forall j: \sum_i x_{i,j} \leq 1 \\
    \forall i,j: x_{i,j} \geq 0 
\end{gather*}
The algorithm is primal-dual in nature, and uses the dual program which was defined for the online variant of the problem in \cite{devanur2012online}:
\begin{gather*}
    \textnormal{($\ICAD$)}: \min \sum_i y_i(t_i) + \sum_j p_j \\
    \forall i,j: p_j \geq u_{i,j}v_i'(t_i) \\
    \forall i,j: t_i, p_j \geq 0 
\end{gather*}
where $y_i(t_i) = v_i(t_i) - t_iv_i'(t_i)$ is as defined $y$-intercept of the tangent to $v_i$ at $t_i$. Thus we have the following lemma.

\begin{lemma}[shown in \cite{devanur2012online}] \label{lem:duality}
The above convex programs form a primal-dual pair. That is, any feasible solution to $\ICAD$ has objective at least that of any feasible solution to $\ICACP$.
\end{lemma}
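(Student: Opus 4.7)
The plan is to prove weak duality directly, using only the concavity of each $v_i$ and the two primal/dual feasibility conditions, without invoking KKT theory or Slater's condition. The guiding observation is that $y_i(t_i) + v_i'(t_i)\,u_i$ is exactly the value at $u_i$ of the tangent line to $v_i$ at $t_i$, so concavity of $v_i$ gives the pointwise bound $v_i(u_i) \le y_i(t_i) + v_i'(t_i)\,u_i$ for every $t_i \ge 0$. This is the one analytic ingredient; the rest is linear bookkeeping against the assignment and price constraints.

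First, I would fix an arbitrary primal-feasible $(x,u)$ and an arbitrary dual-feasible $(t,p)$. I would sum the tangent inequality over $i$ to obtain
\begin{equation*}
\sum_i v_i(u_i) \;\le\; \sum_i y_i(t_i) \;+\; \sum_i v_i'(t_i)\,u_i.
\end{equation*}
Then I would substitute the primal defining equation $u_i = \sum_j u_{i,j} x_{i,j}$ into the last term, giving
\begin{equation*}
\sum_i v_i'(t_i)\,u_i \;=\; \sum_{i,j} v_i'(t_i)\,u_{i,j}\, x_{i,j}.
\end{equation*}

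Next I would apply the dual price constraint $p_j \ge u_{i,j} v_i'(t_i)$ term by term (valid since $x_{i,j}\ge 0$), swap the order of summation, and use $\sum_i x_{i,j}\le 1$ together with $p_j \ge 0$:
\begin{equation*}
\sum_{i,j} v_i'(t_i)\,u_{i,j}\, x_{i,j} \;\le\; \sum_{i,j} p_j\, x_{i,j} \;=\; \sum_j p_j \sum_i x_{i,j} \;\le\; \sum_j p_j.
\end{equation*}
Chaining this with the tangent bound yields $\sum_i v_i(u_i) \le \sum_i y_i(t_i) + \sum_j p_j$, which is precisely the weak-duality claim. Since $v_i$ is only assumed concave and non-decreasing, I would note that where $v_i$ is not differentiable one reads $v_i'(t_i)$ as any supergradient, and the tangent inequality still holds by concavity; the argument is otherwise unchanged.

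The only step that needs care is the tangent inequality at $t_i = 0$ or at corners of $v_i$, since the definition of $y_i$ uses $v_i'(t_i)$; but this is a routine matter of picking a supergradient, so I do not expect a real obstacle. The core argument is essentially the standard Lagrangian/convex-conjugate duality specialized to this assignment formulation, and the lemma is effectively a two-line calculation once the tangent bound is in place.
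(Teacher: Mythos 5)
Your argument is correct. The paper does not actually include a proof of this lemma---it is stated as a known result and attributed to Devanur and Jain~\cite{devanur2012online}---so there is nothing in the paper itself to compare against. Your proof is the standard weak-duality calculation that the cited reference would establish: the supergradient inequality $v_i(u_i) \le y_i(t_i) + v_i'(t_i)\,u_i$ is the single analytic step (valid because $y_i(t_i) + v_i'(t_i)\,u_i$ is exactly the tangent/supporting line to the concave $v_i$ at $t_i$ evaluated at $u_i$), and the remaining manipulations correctly invoke $x_{i,j}\ge 0$ to multiply the constraint $p_j \ge u_{i,j}v_i'(t_i)$, and $p_j \ge 0$ together with $\sum_i x_{i,j}\le 1$ to drop the inner sum. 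Your remark about reading $v_i'(t_i)$ as a supergradient at nondifferentiable points is also apt, since the paper later applies the framework to piecewise-linear valuations.
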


\subsection{Local Curvature Parameters}
\label{subsec:lcb}
As stated in the introduction, both the definition and guarantees provided by our algorithm depend on parameters that measure the local curvatures of each agent valuation function, both in multiplicative (we will use $\mu$) and additive senses ($\alpha$). To define these parameters, let
\begin{equation}
\label{eq:lcb-slope}
\sigma_{i}(z,w) := \frac{v_i(z+w) - v_i(z)}{w},
\end{equation}
be the slope of the lower-bounding  secant line that intersects $v_i$ at points $(z,v_i(z))$ and $(z+w, v_i(z+w))$. Define the  {\em local multiplicative curvature}\footnote{For some functions, a fixed $z^*$ that is an $\arg\max$ in \eqref{eq:mult-lcb} may not exist, and therefore in such cases the $\max$ in the definition should be replaced by a supremum. Such cases can be handled in our analysis by introducing limits when necessary.} of a function $v_i$ at point $z$ with $x$-width $w > 0$ to be:
\begin{equation}
\label{eq:mult-lcb}
\mu_{i}(z, w) := \max_{z^* \in \left(0,w\right)} \left[\frac{v_i(z+z^*)}{v_i(z) + z^*\sigma_i(z,w)} \right].
\end{equation}
Informally, $\mu_{i}(z, w)$ measures the largest multiplicative gap
between a point $z+ z^*$ on the lower bounding secant line and the function evaluated at $z+ z^*$. The definition of $\mu_{i}(z, w)$
is illustrated in Figure \ref{fig:lcb}. The overall local multiplicative curvature for agent $i$ is then defined to be $\mu_i := \max_{z, u_{i,j}} \mu_{i}(z, u_{i,j})$. 

Similarly, we define the {\em local additive curvature} for an agent at point $z$ with $x$-width $w$ to be: 

\begin{equation}
\label{eq:add-lcb}
\alpha_{i}(z, w) := \max_{z^* \in \left(0,w\right)} \left[v_i(z+z^*) - (v_i(z) + z^* \sigma_i(z,w))\right],
\end{equation}
where we again let $\alpha_i := \max_{z, u_{i,j}} \alpha_{i}(z, u_{i,j})$.

\begin{figure}
\centering
\includegraphics[scale=0.7]{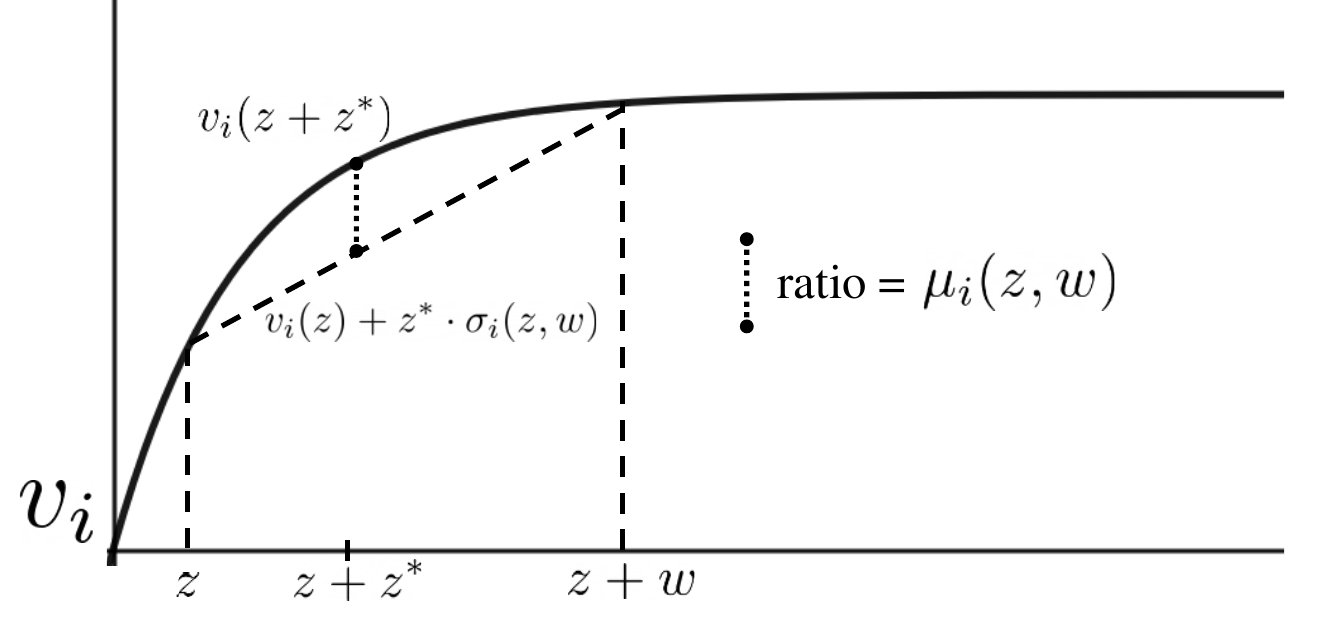}
\caption{{\small Illustration of the definition of the multiplicative local curvature at point $z$ with width $w$ for function $v_i$ (denoted $\mu_{i}(z,w)$).}}
\label{fig:lcb}
\end{figure}

\section{Approximation Algorithms for $\ICA$} \label{sec:ica-algo}
In this section we define our $(1+ \epsilon)\max_i \mu_i$-approximation algorithm for $\ICA$.
For simplicity, throughout the section we assume the algorithm has knowledge of the value of each $\mu_i$ for any given set of valuation functions. 
At the end of the section, we will briefly discuss how the algorithm can be redefined so that algorithm does not need knowledge of $\mu_i$.

\subsection{Algorithm Definition}
\label{sec:pd-alg-def}

As was done for the algorithm in \cite{chakrabarty2010approximability} for budget-additive functions, it will be useful to partition the cost of the dual solution according to algorithm's current assignments. Let $A_i$ denote the current set of items assigned to agent $i$ by the algorithm, and let $t_i$ be the current dual variable maintained by the algorithm for agent $i$. Define
\begin{equation}
\label{eq:dual-agent}
D(u_i) = y_i(t_i) + u_i v_i'(t_i)
\end{equation}
to be the utility for agent $i$ but instead evaluated according 
to the tangent line in the dual objective taken at point $t_i$. The algorithm will maintain that at any point, each item $j$ will be assigned to the bidder $i$ that maximizes 
$u_{i,j}v_i'(t_i)$ (and will reassign an item if this doesn't hold). We call such an assignment a {\em proper assignment.} 


\begin{definition} \label{def:proper-allocation}
Given a dual solution $v$, an item $j$ is said to be properly assigned if $j$ is assigned to agent $\arg\max_{i}(u_{i,j}v_i'(t_i))$. Otherwise, item $j$ is said to be improperly assigned. 
\end{definition}

In an allocation where all item's are properly
assigned, we can obtain the following characterization of the dual objective.

\begin{lemma}
\label{lem:alg-dual-prop}
Fix a point in the algorithm with primal and dual variables $u_i$ and $t_i$ for each agent $i$. If all items are properly allocated, then (i) setting $p_j = \max_{i} v'_i(t_i)u_{i,j} $ forms a feasible solution (with the $t_i$) to the dual $\ICA$-D, and (ii)
the objective of the dual can be expressed as $\sum_{i} D(u_i)$. 
\end{lemma}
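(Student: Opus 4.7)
The plan is to prove both parts by direct computation, exploiting the definition of a proper assignment to rewrite $\sum_j p_j$ in a form that merges cleanly with $\sum_i y_i(t_i)$.

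For part (i), I would verify the dual constraints one at a time. The only non-trivial constraint is $p_j \geq u_{i,j}v_i'(t_i)$ for every pair $(i,j)$, and this holds trivially because $p_j$ is defined as the maximum of exactly these quantities over $i$. For non-negativity, I would note that $v_i$ is non-decreasing and differentiable, so $v_i'(t_i)\geq 0$; combined with $u_{i,j}\geq 0$ this gives $p_j\geq 0$, and the $t_i$ are maintained as non-negative by the algorithm.

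For part (ii), the main identity to establish is $\sum_j p_j = \sum_i u_i v_i'(t_i)$, after which adding $\sum_i y_i(t_i)$ to both sides and recalling \eqref{eq:dual-agent} immediately gives $\sum_i y_i(t_i) + \sum_j p_j = \sum_i D(u_i)$. To establish the identity, let $i_j$ denote the agent to whom item $j$ is currently assigned. Since all items are properly allocated, Definition~\ref{def:proper-allocation} gives $i_j \in \arg\max_i u_{i,j}v_i'(t_i)$, so $p_j = u_{i_j,j}v_{i_j}'(t_{i_j})$, i.e., the price of $j$ is exactly the contribution of $j$ to its owner. I would then reindex the sum by owner:
\begin{equation*}
\sum_j p_j \;=\; \sum_i \sum_{j \in A_i} u_{i,j} v_i'(t_i) \;=\; \sum_i v_i'(t_i)\sum_{j\in A_i} u_{i,j} \;=\; \sum_i u_i v_i'(t_i),
\end{equation*}
where the last step uses that in the current (integral) algorithmic state $u_i = \sum_{j\in A_i} u_{i,j}$.

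There is no real obstacle here; the proof is a direct unpacking of definitions. The only conceptual point worth flagging is the implicit identification of the algorithm's primal state $u_i = \sum_{j\in A_i} u_{i,j}$ with the variable $u_i$ appearing in $D(u_i)$, which is valid because the algorithm always maintains an integral assignment so the constraint $u_i=\sum_j u_{i,j}x_{i,j}$ from $\ICACP$ is satisfied with $x_{i,j}\in\{0,1\}$.
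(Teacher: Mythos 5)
Your proof is correct and follows essentially the same route as the paper's: part (i) by direct inspection of the dual constraints, and part (ii) by using proper allocation to reindex $\sum_j p_j$ over owners and then adding $\sum_i y_i(t_i)$ to both sides. You simply spell out a couple of small points (non-negativity of $p_j$, the identification of $u_i$ with $\sum_{j\in A_i}u_{i,j}$) that the paper leaves implicit.
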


\begin{proof}
Note that (i) follows immediately from the dual constraints. 
To see (ii), observe that
\begin{equation}
\label{eq:alg-dual-prop2}
\sum_j p_j = \sum_i \sum_{j \in A_i} (u_{i,j} v_i'(t_i)) = \sum_i ( u_i v_i'(t_i)).
\end{equation}
Adding $\sum_i y_i(t_i)$ to both sides of  \eqref{eq:alg-dual-prop2}, on the LHS we obtain the dual objective,
and on the RHS we obtain $\sum_{i} D(u_i)$, as desired.
\end{proof}

We can now define our algorithm (given formally in  Algorithm \ref{alg:pd}). At any point the algorithm maintains a 
setting of dual variables $t_i$ for all agents $i$ and variables $p_j$ for all items $j$. 
Each $t_i$ variable is initialized to be 0, and then items are properly assigned accordingly.
The algorithm proceeds by continuously increasing $t_i$ values (thus decreasing $v_i'(t_i)$), allowing items to defect if they are no longer properly assigned. 
The goal of the algorithm is to eventually is obtain an allocation such that $D(u_i)/ v_i(u_i)  \leq \mu_i$ for all agents. 
Under this condition, as long items remain (close to) properly assigned upon the algorithm's termination, Lemmas~\ref{lem:duality} and~\ref{lem:alg-dual-prop} imply that the approximation ratio of the algorithm is at most $\max_i \mu_i$.


\begin{algorithm}
\caption{Multiplicative $(\max_i \mu_i$)-Approximation Algorithm for $\ICA$}
\label{alg:pd}
Initialize all $t_i = 0$ \\
Allocate each item $j$ to agent $\arg\max_{i} \left(u_{i,j}v_i'(0)\right)$ \\
\While{there exists a agent $i$ such that $D(u_i)/ v_i(u_i) > \mu_i$}{
    \While{$D(u_i)/ v_i(u_i) > \mu_i$}{
    \eIf{there is an item $j$ such that $j$ is assigned to $i$ and $i \neq \arg\max_{k} \left( u_{k,j}  v_k'(u_k)\right)$}{
    Reassign $j$ to agent $\arg\max_{k} \left( u_{k,j}  v_k'(u_k)\right)$
    }
    {Increase tangent point $t_i$ until $v_i'(t_i)$ decreases by a factor of $1/(1+\epsilon)$}
    }
}
Output resulting allocation $u_i$ for all agents
\end{algorithm}

\subsection{Analysis}
\label{subsec:pd-analysis}
Our algorithm seeks to find an allocation such that, for every agent $i$, the inequality $v_i(u_i) / D(u_i) \geq \mu_i$ is satisfied. To this end, we establish the following terminology.

\begin{definition}
There are two types of agents such that $D(u_i)/v_i(u_i) > \mu_i$: either $u_i < t_i$ or $u_i < t_i$. 
Call such agents under allocated and over allocated, respectively. 
\end{definition}

The main technical hurdles for our more general setting of concave-additive functions is showing that no agent becomes under allocated and establishing the claimed run time bound. We first state and prove a useful a property of monotone-concave functions. 

\begin{lemma}
\label{fact:concave-fact}
Let $f:\mathbf{R}_+\rightarrow \mathbf{R}_+$ be a monotone concave function. Suppose $\ell_1(\cdot)$ and $\ell_2(\cdot)$ are the equations of two lines tangent to $f$ at points $(t_1, f(t_1))$ and $(t_2, f(t_2))$, respectively. If there exists $x \geq \max(t_1, t_2)$ such that $\ell_2(x) \leq \ell_1(x)$, then (i) $t_1 \leq t_2$ and (ii) $\ell_1(\tilde x) \geq \ell_2(\tilde x)$ for all $\tilde x \leq t_1$.
\end{lemma}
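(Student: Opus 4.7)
The plan is to reduce both parts of the lemma to elementary slope-and-intercept bookkeeping on the affine function $h(x) := \ell_2(x) - \ell_1(x)$, using only two standard facts about any monotone concave $f$: (F1) the derivative $f'$ is non-increasing, so later tangent points correspond to smaller-slope tangents; (F2) each tangent line $\ell_t(x) = f(t) + f'(t)(x-t)$ globally upper-bounds $f$, with equality at $x = t$. No further properties of $f$ enter the argument.

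For part (i), I would argue by contradiction. Assuming $t_1 > t_2$, (F1) yields $\text{slope}(\ell_2) \geq \text{slope}(\ell_1)$, so $h$ has non-negative slope, and (F2) applied to $\ell_2$ at the point $x = t_1$ yields $h(t_1) = \ell_2(t_1) - f(t_1) \geq 0$. An affine function that is non-negative at a point and has non-negative slope is non-negative on the entire forward ray, so $h \geq 0$ on $[t_1, \infty) = [\max(t_1, t_2), \infty)$, contradicting the hypothesis that $h \leq 0$ somewhere in that interval, up to the tie case $\ell_1 \equiv \ell_2$, which I would dispose of by choosing consistent tangent-line representatives.

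For part (ii), using the fact that $t_1 \leq t_2$ is now in hand, I plan to run the mirror-image analysis. By (F1), $\text{slope}(\ell_1) \geq \text{slope}(\ell_2)$, and by (F2) applied to $\ell_1$ at $x = t_2$, we get $\ell_1(t_2) \geq f(t_2) = \ell_2(t_2)$. Expanding
\[
\ell_1(\tilde x) - \ell_2(\tilde x) \;=\; [\ell_1(t_2) - \ell_2(t_2)] \;+\; (\text{slope}(\ell_1) - \text{slope}(\ell_2))(\tilde x - t_2),
\]
and restricting $\tilde x$ to the appropriate half-line (where the sign of $\tilde x - t_2$ combines with the non-negative slope difference to give a controlled contribution) then propagates the inequality from the pivot point outward to the entire range claimed in the lemma.

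The main obstacle I foresee is the degenerate case in which $f$ is locally linear between $t_1$ and $t_2$, so the two tangent lines coincide on a nontrivial interval. There both (F1) and (F2) become equalities, and the tangent-point $\mapsto$ tangent-line correspondence is no longer injective; I would need to confirm that the claimed inequalities still hold as weak inequalities and that the choice of tangent-point representative can be made consistent with the conclusion $t_1 \leq t_2$. Apart from this, the proof is purely two-dimensional geometry and should be completable in a few lines.
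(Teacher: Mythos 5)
Your argument for part (i) is correct and is essentially the paper's own proof: assume $t_1 > t_2$, use the fact that the tangent $\ell_2$ upper-bounds $f$ at $t_1$ together with $f'(t_1) \leq f'(t_2)$ to force $\ell_1(x) \leq \ell_2(x)$ for all $x \geq t_1$, contradicting the hypothesis (modulo the tie case, which is harmless since all inequalities are weak).

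Your plan for part (ii) does not close, and the reason is worth being blunt about: \emph{conclusion (ii) as printed in the lemma is false.} In your pivot-at-$t_2$ decomposition
\[
\ell_1(\tilde x) - \ell_2(\tilde x) \;=\; \bigl[\ell_1(t_2) - \ell_2(t_2)\bigr] \;+\; \bigl(\mathrm{slope}(\ell_1) - \mathrm{slope}(\ell_2)\bigr)(\tilde x - t_2),
\]
the first bracket is $\geq 0$ but the second term is $\leq 0$ (non-negative slope gap times $\tilde x - t_2 \leq 0$), so the two contributions fight each other and nothing can be concluded. This is not a bookkeeping nuisance you can patch; a concrete counterexample is $f(x) = \sqrt{x}$, $t_1 = 1$, $t_2 = 4$, giving $\ell_1(x) = (x+1)/2$ and $\ell_2(x) = x/4 + 1$. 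The hypothesis holds at $x = 4$ (where $\ell_2 = 2 \leq 2.5 = \ell_1$) and $t_1 \leq t_2$, yet at $\tilde x = 0 \leq t_1$ we have $\ell_1(0) = 0.5 < 1 = \ell_2(0)$, contradicting (ii) as written.

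The intended statement of (ii) is the reverse inequality, $\ell_1(\tilde x) \leq \ell_2(\tilde x)$ for $\tilde x \leq t_1$, and that is also what the proof of Lemma~\ref{lem:under-bad} actually needs once the labels $\ell_1, \ell_2$ there are matched up consistently with the direction of Inequality~\eqref{eq:secant-bound} (the application should set $\ell_1 = D(\cdot)$ and $\ell_2 = \tilde s(\cdot)$, so that the hypothesis $\ell_2(u_i) \leq \ell_1(u_i)$ is $\tilde s(u_i) \leq D(u_i)$ and the conclusion gives $D(u_i - u_{i,j}) \leq \tilde s(u_i - u_{i,j})$). With the corrected direction of (ii), your pivot argument works cleanly if you pivot at $t_1$ rather than $t_2$: then $\ell_1(t_1) = f(t_1) \leq \ell_2(t_1)$ makes the first bracket $\leq 0$, the second term is $\leq 0$ as before, and the sum is $\leq 0$. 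So the fix is (a) reverse the inequality in (ii), and (b) in your argument, pivot at the smaller tangent point $t_1$; after those two changes the proof is a few lines as you expected. Your concern about the locally-linear degenerate case is minor and handled automatically since every step is a weak inequality.
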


\begin{proof}
We first show (i). For sake of contradiction suppose $t_1 > t_2$. Since $f$ is monotone and concave, it follows that:

\begin{equation}
\label{eq:concav-fact-1}
f(t_1) < \ell_2(t_1) = f(t_2) + (t_1 - t_2)f'(t_2).
\end{equation}
Therefore, we have the following:
\begin{alignat*}{2}
\ell_1(x)  = f(t_1) + (x- t_1)f'(t_1)  &< f(t_2) + (t_1 - t_2)f'(t_2) + (x - t_1)f'(t_2) \\
& = f(t_2) + (x - t_2)f'(t_2) \\
& = \ell_2(x),
\end{alignat*}
where the inequality follows from \eqref{eq:concav-fact-1} and the fact that $f'(t_1) < f'(t_2)$. This directly contradicts the assumption in the lemma, thus showing (i). Given $t_1 \leq t_2$, (ii) then follows by a similar argument. 
\end{proof} 
\begin{lemma} \label{lem:under-bad}
Throughout the algorithm, an agent never becomes under allocated. In particular, if $u_i < t_i$ then  $D(u_i)/ v_i(u_i) \leq \mu_i$.
\end{lemma}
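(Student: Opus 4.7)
Plan: I want to show that whenever the state $u_i < t_i$ holds, the ratio $D(u_i)/v_i(u_i) \leq \mu_i$; this is what the ``In particular'' clause asserts, and it immediately implies that an agent never becomes under allocated. The strategy is to argue inductively over the algorithm's operations, carefully classifying how the state $u_i < t_i$ can first arise.

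First I would set up the induction. Initially $t_i = 0 \leq u_i$, so the premise $u_i < t_i$ is vacuous. The only operations that can perturb the invariant are (a) the multiplicative increase of $t_i$ in the inner loop for agent $i$, and (b) a defection of some item $j$ from $A_i$. For (a), I would argue that the algorithm increases $t_i$ only when $D(u_i)/v_i(u_i) > \mu_i$, and by the inductive hypothesis this rules out $u_i < t_i^{\mathrm{old}}$, so $u_i \geq t_i^{\mathrm{old}}$. Since $D(u_i)$ equals $v_i(u_i)$ at $t_i = u_i$ and is a smooth function of $t_i$ (with $\tfrac{d}{dt_i}D = (u_i - t_i)v_i''(t_i)$ vanishing at $t_i = u_i$), the controlled $(1+\epsilon)$-step together with Lemma~3.3 comparing successive tangent lines lets me bound the ratio immediately after the step, reducing to case (b).

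For case (b), the inductive invariant forces $u_i^{\mathrm{old}} \geq t_i$ just before the defection; afterwards $u_i = u_i^{\mathrm{old}} - u_{i,j} < t_i$, which pins down $t_i - u_i \in (0, u_{i,j}]$. The defining property of $\mu_i \geq \mu_i(u_i, u_{i,j})$ applied at $z^* = t_i - u_i$ gives
\[
v_i(t_i) \leq \mu_i\bigl(v_i(u_i) + (t_i - u_i)\,\sigma_i(u_i, u_{i,j})\bigr).
\]
Combining this with the identity $D(u_i) = v_i(t_i) - (t_i - u_i)v_i'(t_i)$, I would derive $D(u_i) \leq \mu_i v_i(u_i)$ by absorbing the extra term $\mu_i(t_i - u_i)\sigma_i(u_i, u_{i,j})$ into $(t_i - u_i)v_i'(t_i)$, invoking the fact that item $j$ was properly assigned to $i$ just before the last $t_i$-increase; this proper-assignment constraint translates into the needed bound on $v_i'(t_i)$ relative to the secant slope over $[u_i, u_i + u_{i,j}]$.

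The hardest step will be making this absorption precise: a priori, $v_i'(t_i)$ can be strictly smaller than $\sigma_i(u_i, u_{i,j})$, so the inequality is not purely geometric. The proper-assignment invariant for item $j$ together with Lemma~3.3 applied to the tangent lines at $t_i$ and at the corresponding $t_k$ of the agent receiving the defected item is what resolves this. If this proves insufficient, I would instead re-parametrize the $\mu_i$-maximization with a shifted anchor $z = u_i - \delta$ chosen so that the evaluation point $z + z^*$ lands at $u_i$, yielding a direct lower bound on $v_i(u_i)$ in terms of a secant, which can then be compared to $D(u_i)$ via the tangent-line identity.
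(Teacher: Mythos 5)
Your inductive framework and case split (defection versus $t_i$-increase) mirror the paper's, and you correctly apply the $\mu_i$-definition at the anchor $z = u_i^{\mathrm{new}}$. But the ``absorption'' step in case (b) is a genuine gap, and neither of your proposed fixes closes it. Tracing your argument through: with $z^* = t_i - u_i$ and $\sigma := \sigma_i(u_i, u_{i,j})$, you get
\[
D(u_i) \;\leq\; \mu_i v_i(u_i) + z^*\bigl(\mu_i \sigma - v_i'(t_i)\bigr),
\]
so you need $\mu_i\sigma \leq v_i'(t_i)$. This fails in general: for $v(u)=\sqrt{u}$, $u_{i,j}=1$, $u_i=1$, we have $\sigma = \sqrt{2}-1 \approx 0.414$, but $v'(2) = 1/(2\sqrt{2}) \approx 0.354$, so even $\sigma \leq v_i'(t_i)$ already fails when $t_i$ is close to $u_i + u_{i,j}$, let alone after multiplying by $\mu_i \geq 1$. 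The proper-assignment invariant only compares $u_{i,j}v_i'(t_i)$ to $u_{k,j}v_k'(t_k)$ for \emph{other} agents $k$; it says nothing about the secant slope of $v_i$ over $[u_i, u_i+u_{i,j}]$, so it cannot supply the missing inequality. The shifted-anchor alternative is too vague to assess, and I don't see how it yields a lower bound on $v_i(u_i)$ of the needed form without some additional geometric input.

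What the missing ingredient actually is: you have not yet used the fact that agent $i$ was \emph{over allocated at $u_i^{\mathrm{old}}$ at the time of the defection}, i.e.\ $D(u_i^{\mathrm{old}}) > \mu_i\, v_i(u_i^{\mathrm{old}})$. This is what bounds $t_i$ from above. The paper's route is to let $\tilde\mu_i := \mu_i(z, u_{i,j})$, $s$ the secant through $(z,v_i(z))$ and $(u_i^{\mathrm{old}}, v_i(u_i^{\mathrm{old}}))$, and $z^*_{\max}$ the argmax in the definition of $\tilde\mu_i$; then the scaled line $\tilde s = \tilde\mu_i s$ is tangent to $v_i$ at $z+z^*_{\max}$, and the loop condition gives $\tilde s(u_i^{\mathrm{old}}) = \tilde\mu_i v_i(u_i^{\mathrm{old}}) < D(u_i^{\mathrm{old}})$. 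Comparing the two tangent lines $\tilde s$ and $D(\cdot)$ (the content of Lemma~\ref{fact:concave-fact}) then forces $t_i \leq z + z^*_{\max}$ and hence $D(z) \leq \tilde s(z)$, from which $D(z)/v_i(z) = D(z)/s(z) \leq \tilde s(z)/s(z) = \tilde\mu_i \leq \mu_i$. In your vocabulary this is exactly the fact that $v_i'(t_i) \geq v_i'(z+z^*_{\max}) = \tilde\mu_i\sigma$ --- so the absorption is salvageable, but only with $\tilde\mu_i$ in place of $\mu_i$ and only after the tangent-line comparison at $u_i^{\mathrm{old}}$ establishes $t_i \leq z+z^*_{\max}$, which your proposal omits.
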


\begin{proof}
At the start of the algorithm $t_i = 0$ for all agents, and so no agent can be under allocated at the outset of the algorithm.
Therefore, the only point at which 
an agent $i$ with total utility $u_i$ could potentially become under allocated is when some item $j$ is reassigned to another agent on Line 6 of the algorithm such that after the reassignment $u_i - u_{i,j} < t_i$. Fix 
such a point in the algorithm. 



Let $s(x)$ denote the equation for the secant line that passes through $v_i$ at points $(u_i- u_{i,j}, v_i(u_i- u_{i,j}))$ and $(u_i, v_i(u_i))$. 
More formally, let $\sigma_i' := \sigma_i(u_i - u_{i,j}, u_{i,j})$, 
where the definition of $\sigma_i(\cdot)$ is given by Equation \eqref{eq:lcb-slope}.
Then the equation for $s(x)$ is

\begin{equation}
s(x) = \sigma_i' x + v_i(u_i - u_{i,j}) - \sigma_i'  (u_i - u_{i,j}).
\end{equation}

Let $\tilde\mu_i := \tilde\mu_i(u_i-u_{i,j})$ denote the local multiplicative curvature of $v_i(\cdot)$ at $u_i-u_{i,j}$  , and let $z^*$ be value (given in Equation \eqref{eq:mult-lcb}) that determines $\tilde\mu_i$.
Based on the definition of $\tilde\mu_i$, observe that if we scale $s(x)$  by a factor of $\tilde\mu_i$, we obtain an equation for a line that is tangent to 
$v_i$ at the point ($u_i - u_{i,j} + z^*, v_i(u_i - u_{i,j}+ z^*))$. 
Denote the equation for this line as $\tilde{s}(x) = \tilde\mu_is(x)$. 
Since by definition $s(u_i) = v(u_i)$, it follows that

\begin{equation}
\label{eq:secant-bound}
\tilde s(u_i) = \tilde\mu_i s(u_i) = \tilde\mu_iv_i(u_i)  \leq \mu_i v_i(u_i) < D(u_i),
\end{equation} 
where the first inequality holds by definition of $\mu_i$ and the last inequality holds because $i$ entered the loop on Line 3.  Given Inequality \eqref{eq:secant-bound}, we can  apply  Lemma \ref{fact:concave-fact} to show that $D(u_i - u_{i,j}) \leq  \tilde s(u_i - u_{i,j})$. In particular, by setting $\ell_1(\cdot) = s(\cdot)$, $\ell_2 = D(\cdot)$ and $x = u_i$, the lemma implies both that $t_i \leq z^*$ and $D(u_i - u_{i,j}) \leq   \tilde s(u_i - u_{i,j})$. Thus it follows:
\begin{equation}
\label{eq:ratio-est}
\frac{D(u_i-u_{i,j})}{v_i(u_i-u_{i,j})} =\frac{D(u_i-u_{i,j})}{s(u_i-u_{i,j})} \leq \frac{\tilde s(u_i-u_{i,j})}{s(u_i-u_{i,j})} = \tilde\mu_i \leq \mu_i,
\end{equation} 
 which establishes the lemma.  Note the first equality follows from the definition of $s(x)$, and the first inequality follows from $D_{t_i}(u_i - u_{i,j}) \leq   \tilde s(u_i - u_{i,j})$.
\end{proof}



Next,  we establish the run time of the algorithm. For simplicity, we will assume that $\epsilon$ is selected such that for all agents $\mu_i \geq 1+\epsilon$. This is possible as long as $\mu > 1$. (If $\mu_i = 1$, then $v_i(u_i) = D(u_i)$, and thus the agent is never under or over allocated.)

\begin{lemma}
\label{lem:alg-pd-term}
Let $\umaxi = \sum_{j} u_{i,j}$ denote the maximum possible spend for fixed agent $i$, and let 
\begin{equation*}
\rmax = \max_i \left(\frac{v_i'(0)\umaxi}{v_i(\umaxi)}\right)
\end{equation*} 
denote the maximum ratio (over all agents $i$) between the total additive utility evaluated along the tangent line at $v_i'(0)$, versus the total spend evaluated with $v_i(\cdot)$. If for all agents $\mu_i \geq 1+\epsilon$, then the algorithm terminates in time $O(mnT\ln(\rmax/\epsilon)/\epsilon)$, 
where $T$ is the time needed to perform the update of $v_i'(t_i)$ on Line 8.
\end{lemma}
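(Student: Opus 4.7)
The plan is to bound the total runtime by separately counting (i) the number of price-update iterations executed at Line 8 per agent, and (ii) the number of item reassignments executed at Line 6, then multiplying by per-iteration costs. The counting of (ii) will follow easily from a structural observation, so essentially all the analytical work goes into (i).

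For (i), I will fix an agent $i$ and consider the very last Line-8 update performed for $i$ during the algorithm. Immediately before that update the exit condition failed, so $D(u_i) > \mu_i v_i(u_i)$, where $u_i$ is the total utility at that moment. Lemma~\ref{lem:under-bad} guarantees $t_i \leq u_i$ throughout any iteration of $i$'s inner loop, and together with $v_i \geq 0$ this gives $y_i(t_i) = v_i(t_i) - t_i v_i'(t_i) \leq v_i(t_i) \leq v_i(u_i)$. Substituting into $D(u_i) = y_i(t_i) + u_i v_i'(t_i)$ yields
\begin{equation*}
u_i v_i'(t_i) > (\mu_i - 1)\, v_i(u_i) \geq \epsilon\, v_i(u_i).
\end{equation*}
The subsequent update scales $v_i'(t_i)$ by $1/(1+\epsilon)$, so its final value obeys $v_i'(t_i) > \epsilon v_i(u_i)/((1+\epsilon)u_i)$. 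A standard concavity argument (using $v_i \geq 0$, so $v_i(x)/x$ is nonincreasing) gives $v_i(u_i)/u_i \geq v_i(\umaxi)/\umaxi$, so
\begin{equation*}
v_i'(t_i^{\mathrm{final}}) \;\geq\; \frac{\epsilon}{1+\epsilon}\cdot\frac{v_i(\umaxi)}{\umaxi}.
\end{equation*}
Since each update divides $v_i'$ by $1+\epsilon$ starting from $v_i'(0)$, the number of price updates for agent $i$ is at most
\begin{equation*}
\log_{1+\epsilon}\!\left(\frac{(1+\epsilon)\, v_i'(0)\, \umaxi}{\epsilon\, v_i(\umaxi)}\right) = O\!\left(\frac{\ln(\rmax/\epsilon)}{\epsilon}\right),
\end{equation*}
and summing over agents gives $O(n \ln(\rmax/\epsilon)/\epsilon)$ price updates overall.

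For (ii), I will observe that a Line-8 update only decreases $v_i'(t_i)$ of the active agent, leaving every bid $u_{k,j}v_k'(t_k)$ with $k \neq i$ unchanged. Consequently the only items that can become improperly assigned as a result of the update are those currently in $A_i$, and once such an item is moved to its new maximizer $k$, the bid at $k$ is unaltered, so $j$ stays properly assigned at $k$ until $k$ is itself updated. Hence each Line-8 update triggers at most $|A_i| \leq m$ reassignments, for a total of $O(mn \ln(\rmax/\epsilon)/\epsilon)$ reassignments. Each reassignment costs $O(n)$ to identify the new best agent and each Line-8 update costs $T$; absorbing the per-reassignment search cost into the update cost and multiplying by the iteration count yields the claimed bound $O(mnT\ln(\rmax/\epsilon)/\epsilon)$.

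The hard part is step (i): pinning down $v_i'(t_i^{\mathrm{final}})$ requires combining three ingredients at once — the barely-failed pre-update condition $D(u_i) > \mu_i v_i(u_i)$; the non-under-allocation property $t_i \leq u_i$ from Lemma~\ref{lem:under-bad}, which is precisely what lets us upper-bound $y_i(t_i)$ by $v_i(u_i)$ and reduce $D(u_i)$ to the clean quantity $u_i v_i'(t_i)$; and a concavity estimate that trades the state-dependent $v_i(u_i)/u_i$ for the instance-level normalization $v_i(\umaxi)/\umaxi$ appearing in $\rmax$. Everything else is routine bookkeeping.
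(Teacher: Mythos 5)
Your proof is correct and follows the same high-level strategy as the paper: bound the number of Line-8 updates per agent by tracking the geometric decay of $v_i'(t_i)$, and observe that between consecutive updates at most $m$ reassignments can occur (since bids only drop and an item moves only when its holder's bid drops). The paper argues in the forward direction (after $H_i = O(\ln(\rmax/\epsilon)/\epsilon)$ updates the derivative is small enough that the while loop never re-enters for agent $i$), whereas you argue backward from the last update (right before it the loop condition held, so the derivative was bounded below); these are equivalent ways of bookkeeping the same geometric sequence. One place your write-up is actually more careful than the paper's: you explicitly invoke the concavity ratio $v_i(u_i)/u_i \geq v_i(\umaxi)/\umaxi$ (valid since $v_i(0)\geq 0$) to pass from the state-dependent quantity $v_i(u_i)$ to the instance-level quantity $v_i(\umaxi)/\umaxi$ that appears in $\rmax$. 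The paper's write-up instead bounds $D(u_i) \leq \mu_i v_i(\umaxi)$ and then asserts the agent cannot be over allocated, but over-allocation is the condition $D(u_i) > \mu_i v_i(u_i)$ with $v_i(u_i) \leq v_i(\umaxi)$, so the stated inequality alone does not directly imply the claim; the ratio estimate you made explicit is precisely the step that closes that implicit gap. So the piece you flagged as the ``hard part'' is indeed the load-bearing ingredient, and your version of it is cleaner.
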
 

\begin{proof}

We begin by showing that once the update on Line 8 occurs  $O(\ln(\rmax/\epsilon)/\epsilon)$ times for a fixed agent $i$, 
then agent $i$ cannot be under or over allocated for the remainder of algorithm's execution 
(i.e., the algorithm does not again enter the while loop on Line 3 for agent $i$). Define $H_i$ as the following:

\begin{equation*}
H_i = \frac{1}{\ln(1+\epsilon)}\cdot \ln\left(\frac{v'_i(0)\umaxi}{\epsilon v_i(\umaxi)}\right) \leq \frac{\ln(\rmax/\epsilon)}{\ln(1+\epsilon)} = O\left(\frac{\ln(\rmax/\epsilon)}{\epsilon}\right).
\end{equation*} 
After $H_i$ Line 8 updates to agent $i$, 
we can bound $v'_i(t_i)$ as follows:

\begin{equation} 
\label{eq:runtime-1}
 v_i'(t_i) = \frac{v_i'(0)}{(1+\epsilon)^{H_i}}
= \frac{\epsilon  v_i(\umaxi)}{\umaxi} 
\leq \frac{(\mu_i-1)v_i(\umaxi)}{\umaxi}
\end{equation}
where the inequality follows from our assumption 
$\mu_i \geq 1+\epsilon$. Rearranging \eqref{eq:runtime-1}, we obtain
\begin{equation}
\label{eq:runtime-2}
v_i(\umaxi) + \umaxi v_i'(t_i) \leq \mu_i v_i(\umaxi).
\end{equation}
 Recall that by Lemma~\ref{lem:under-bad}, an agent can be never be under allocated. Therefore we may assume that $t_i < u_i$. We can then upper bound $D(u_i)$ as follows:
\begin{align*} 
D(u_i) &= v_i(t_i) + (u_i - t_i)\cdot v_i'(t_i) \\
& \leq v_i(\umaxi) + \umaxi v_i'(t_i) \leq \mu_iv_i(\umaxi),
\end{align*}
where the first inequality follows from the fact 
$t_i \leq u_i \leq \umaxi$, and the last inequality follows from \eqref{eq:runtime-2}. This means agent $i$ cannot be over allocated for the rest of the algorithm, which means agent $i$ will remain paid for.

To complete the argument, notice that once an item $j$ is reassigned on Line 6, the algorithm cannot reassign $j$ again until it increases $t_i$ for some agent $i$ on Line 8. 
Thus, the algorithm can perform at most $m$ reassignments on Line 6 before an update on Line 8 must occur for some agent.
Summing over all agents establishes the lemma.
\end{proof}

\begin{proof}[Proof of Theorem~\ref{thm:pd-alg}]

Notice that after termination, by increasing each dual variable $p_j = \max_{i} v'_i(t_i)u_{i,j}$ by a factor of $(1+\epsilon)$ we obtain a feasible dual solution, since the only items that are improperly assigned are ones allocated to an agent that exited the while loop on Line 3 before the item should have been reallocated on Line 6 in the inner while loop. Let $\OPT$ denote the optimal solution for the instance. By the above argument and Lemma~\ref{lem:duality} we have
\[
\OPT \leq (1+\epsilon)\sum_i D(u_i).
\]
Lemma \ref{lem:alg-pd-term} ensures the algorithm will eventually terminate, and since the algorithm terminates, no agent can be over allocated.  Furthermore, by Lemma \ref{lem:under-bad}, no agent can be under allocated. Therefore for all agents $i$ we have $D(u_i) \leq \mu_i v_i(u_i)$ for every $i$. Combining these inequalities yields
\[
\OPT \leq (1+\epsilon)\sum_i \mu_i v_i(u_i) \leq (1+\epsilon) (\max_i \mu_i)\cdot \sum_i v_i(u_i),
\]
which proves the theorem.
\end{proof}
We conclude the section by briefly discussing how
Algorithm \ref{alg:pd} can still execute without knowledge of $\mu$ and can be adapted to the additive setting.

\subsubsection{No Knowledge of $\mu_i$} The algorithm can be adapted to operate without knowledge of each $\mu_i$ by  (i) repeatedly guess values of $\mu = \max_i\mu_i$, then (ii) check Lines 3 and 4 against $\mu$ instead.
In particular, we can start with an overestimate of $\mu$ 
(i.e., start with $\mu = 1+\epsilon$ and repeatedly raise the guess in increments of $\epsilon$), and then check after all reassignments whether 
or not an agent is under allocated. If no agent ever becomes under allocated, then it follows the algorithm achieves an approximation at its current guess; furthermore, by the Lemma \ref{lem:under-bad}, the algorithm is guaranteed to have no agents become under allocated once the guess reaches the true value of $\mu$. This modification comes at an $O(1/\epsilon)$ factor in the run time of the algorithm and an additional additive $O(\epsilon)$ error in the approximation factor. 

\subsubsection{Adaptation to Additive Guarantee}
\label{subsec:additive} 

As discussed in the introduction, an appealing feature of our geometric-based arguments is that they easily extend to obtain additive guarantees as well. This result is stated formally as follows.

\begin{theorem}
\label{thm:add-alg-bound}
There exists an  algorithm for $ICA$ that achieves an additive bound of $\sum_i \alpha_i + \epsilon$ and runs in time $O(m^2nTv_i'(0)/\epsilon)$, 
where $T$ is the time needed to perform the update of $v_i'(t_i)$ on Line 8 of Algorithm~\ref{alg:pd}. Furthermore, the integrality gap of the corresponding assignment convex program is $\sum_i \alpha = n\alpha$ for instances whose valuations have additive curvature $\alpha$.
\end{theorem}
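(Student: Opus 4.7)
The plan is to take the skeleton of Algorithm~\ref{alg:pd} and change two ingredients to suit the additive setting. First, on Lines~3--4 I would test $D(u_i) - v_i(u_i) > \alpha_i$ in place of $D(u_i)/v_i(u_i) > \mu_i$. Second, on Line~8 I would decrease $v_i'(t_i)$ by an additive quantity $\delta = \Theta(\epsilon/(nm))$ rather than by the multiplicative factor $1/(1+\epsilon)$. All other steps (greedy initialization, reassignment to the argmax bidder $u_{k,j} v_k'(t_k)$, and the nested-loop structure) are unchanged.

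The additive analog of Lemma~\ref{lem:under-bad} is the technical heart: I would show that whenever a reassignment of item $j$ off of agent $i$ would leave her with $u_i - u_{i,j} < t_i$, one already has $D(u_i - u_{i,j}) - v_i(u_i - u_{i,j}) \leq \alpha_i$, so no agent is ever additively under-allocated. The geometric argument parallels the multiplicative case but uses a \emph{parallel} tangent in place of a scaled one. Let $s(x)$ denote the secant through $(u_i - u_{i,j}, v_i(u_i - u_{i,j}))$ and $(u_i, v_i(u_i))$, let $\tilde\alpha_i := \alpha_i(u_i - u_{i,j}, u_{i,j})$ with maximizer $z^*$, and define $\tilde s(x) := s(x) + \tilde\alpha_i$; this $\tilde s$ is precisely the line parallel to $s$ that is tangent to $v_i$ at $u_i - u_{i,j} + z^*$. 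The outer while-loop entry condition gives $D(u_i) > v_i(u_i) + \alpha_i \geq v_i(u_i) + \tilde\alpha_i = \tilde s(u_i)$. Applying Lemma~\ref{fact:concave-fact} to the tangent lines $D$ and $\tilde s$ at $x = u_i$ yields $t_i \leq u_i - u_{i,j} + z^*$ and $D(u_i - u_{i,j}) \leq \tilde s(u_i - u_{i,j})$; subtracting $v_i(u_i - u_{i,j}) = s(u_i - u_{i,j})$ from both sides delivers the claim.

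The run-time argument follows Lemma~\ref{lem:alg-pd-term} but with an arithmetic rather than geometric progress measure. Once $v_i'(t_i) \leq (\alpha_i + \epsilon/n)/\umaxi$, we have $D(u_i) - v_i(u_i) \leq (u_i - t_i)\,v_i'(t_i) \leq \alpha_i + \epsilon/n$, so agent $i$ cannot again be over-allocated; reaching this level requires at most $O(v_i'(0)/\delta) = O(nm\,v_i'(0)/\epsilon)$ decrements per agent, with at most $m$ reassignments between consecutive decrements, giving the stated $O(m^2 n T v_i'(0)/\epsilon)$ bound. At termination, inflating each $p_j$ by $\delta u_{i,j}$ to absorb the at-most-one-step violation restores dual feasibility at total extra cost $O(\epsilon)$; weak duality (Lemma~\ref{lem:duality}) then yields $\OPT \leq \sum_i v_i(u_i) + \sum_i \alpha_i + \epsilon$.

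For the integrality-gap claim, I would mirror the construction from Theorem~\ref{thm:int-gap}. Take $n$ identical agents with valuation $v$ whose additive curvature $\alpha$ over width $u$ is attained at $z^* = u/n$, and a single item of utility $u$ to every agent. The symmetric fractional optimum of $\ICACP$ splits the item evenly, giving welfare $n\,v(u/n) = (n-1) v(0) + v(u) + n\alpha$, while any integer allocation achieves only $v(u) + (n-1)\,v(0)$, producing an additive gap of exactly $n\alpha$ and matching the algorithmic guarantee. The hardest part of the plan is calibrating the additive decrement $\delta$ so that both the running-time and the final dual-feasibility slack come out as claimed, rather than picking up extra $n$ or $m$ factors in the additive error.
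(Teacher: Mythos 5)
Your adaptation of Lemma~\ref{lem:under-bad} is exactly right and is the approach the paper intends: the paper only says that ``ratios can be replaced by differences,'' and your parallel-tangent construction $\tilde s(x) = s(x) + \tilde\alpha_i$ (which is indeed tangent to $v_i$ at $u_i - u_{i,j} + z^*$ since the first-order condition forces $v_i'(u_i - u_{i,j} + z^*) = \sigma_i$, so the tangent and the secant are parallel) makes that replacement precise. That part of the proposal is sound.

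The calibration of the decrement, however, is off, and in the direction you yourself flagged as the danger spot. The paper takes the Line~8 step to be an absolute decrease of $v_i'(t_i)$ by $\epsilon/m$, not $\Theta(\epsilon/(nm))$. With $\epsilon/m$, each of the $m$ items contributes at most $\epsilon/m$ of dual slack to absorb the one-step improperness, so the total extra dual cost is $O(\epsilon)$, as needed; and the number of Line~8 decrements per agent is $O(m\,v_i'(0)/\epsilon)$, so with $n$ agents, at most $m$ reassignments between consecutive decrements, and $T$ time per update, you get $O(m^2 n T v_i'(0)/\epsilon)$. With your $\delta = \Theta(\epsilon/(nm))$ the per-agent decrement count becomes $O(nm\,v_i'(0)/\epsilon)$ and the total running time becomes $O(m^2 n^2 T v_i'(0)/\epsilon)$ --- the extra $n$ does not disappear, so the final sentence of your run-time paragraph is an arithmetic slip rather than the claimed bound. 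The smaller $\delta$ buys nothing: it only shrinks the final slack from $\epsilon$ to $\epsilon/n$, which is already within budget at $\epsilon/m$.

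For the integrality gap, your one-item construction is a clean special case but is not the instance the paper intends, and it does not by itself establish the statement. You restrict to $z = 0$ (no private items) and to $z^*/u = 1/n$, so it only works when the additive curvature of $v$ over width $u$ happens to be realized at that corner. The theorem asserts the gap $n\alpha$ for an arbitrary $(v,u)$ with additive curvature $\alpha$; the paper handles this by reusing the Theorem~\ref{thm:int-gap} construction with $\gamma$ agents, $\beta$ public items (with $\beta/\gamma = z^*/u$ rational), and $\lceil z/u \rceil$ private items per agent to shift each agent's baseline utility to the maximizing $z$, and then the same $\DUAL$ computation shows an additive gap of $\gamma\alpha(z,u)$. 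Your construction is the case $\beta = 1$, $\gamma = n$, $z = 0$; to prove the full claim you need the general $\beta,\gamma$ and the private items.
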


Unfortunately in a general instance $\ICA$ in the additive setting, our guarantee becomes $\sum_i \alpha_i$ (instead of $\max_i \alpha_i$) since our analysis bounds the objective on a per-agent basis.
We note (i) it can be verified that the multiplicative integrality gap example in the next section (Section \ref{sec:int-gap}) can be adapted to show that this bound is tight, and (ii) as we will see in Section \ref{subsec:smooth-anw}, the $\sum_i \alpha_i$ additive bound will translate nicely to a constant approximation for the smooth asymmetric Nash-welfare product objective, since the additive curvature of the weighted-log objective for each agent $i$ will be $O(\eta_i$) with  smoothing parameter $\omega = \Omega(1)$. (The contribution of $\sum_i \eta_i$ from the additive bound is canceled by the $\frac{1}{\sum_i \eta_i}$ term taken in the overall exponent of the product objective). 

The adaption of the algorithm and analysis from Sections \ref{sec:pd-alg-def} and \ref{subsec:pd-analysis} to the additive setting (to prove Theorem \ref{thm:add-alg-bound}) uses many of the same definitions and arguments, so we will only highlight the key differences here:

\begin{itemize}

\item In the algorithm, the while-loop conditions on Lines 3 and 4 are changed to instead be $D(u_i) -  v_i(u_i) > \alpha_i$. The update on Line 8 is changed so that $v_i'(t_i)$ instead decreases by (an absolute amount) $\epsilon/m$. 

\item In our analysis, the arguments in Lemmas \ref{lem:under-bad} work in the additive setting
since the relevant quantities are completely relative to each other, and therefore ratios can be replaced by differences while preserving the logic. The feasibility (adjusting $p_j$ by $\epsilon/m$) and run-time arguments can also be easily adapted to obtain the bounds stated in the theorem.\footnote{In the case of general $\ICA$ instance, the algorithm runs in pseudo-polynomial time based on the value of $v'(0)$. We note that in the our Nash-welfare application, this term will be constant, and thus runs in strongly-polynomial time for a fixed $\epsilon$ and smoothing parameter $\omega$.}
\end{itemize}

\section{Integrality Gap of $\ICA$ Convex Program}
\label{sec:int-gap}

We now prove Theorem \ref{thm:int-gap}. In particular, we show that for any fixed monotone concave value function $v(\cdot)$ and maximum bid value $u$ with local multiplicative curvature $\mu$, we can construct an instance of $\ICA$
such that the optimal fractional solution to $\ICACP$ has objective $\mu$ times that of any integral assignment. 


\paragraph{Instance Construction.} Let $z$ be the $\arg\min$
that defines $\mu$ for function  $v(\cdot)$, and let $z^*$ be the $\arg\max$ that defines $\mu_i(z, u)$ in the definition of $\mu_i$, as shown on the left side of Fig.~\ref{fig:int-gap}.
Without loss of generality, we can assume $z^*/u$ can be expressed as 
rational number $\beta/\gamma$ where $\beta, \gamma \in \mathbb{Z}^+$, since any irrational number has arbitrarily close rational  approximation. (In which case our instance construction can be taken such a limit to obtain the desired bound).

We construct our instance as follows. 
The valuation function of every agent is $v(\cdot)$. 
There are $\gamma$ agents and $\beta + \gamma \cdot \lceil \frac{z}{u} \rceil$ items in total. Of the items, $\beta$ of them are ``public,'' i.e., for all agents $i$ we have have $u_{i,j} = u$. Call this subset of items $M_{\pub}$.  The remaining items $\gamma \lceil\frac{z}{u}\rceil$ items are partitioned among the $\gamma$ agents so that each agent $i$ receives a set of $\lceil\frac{z}{u}\rceil$ ``private'' items $M_i$. For each item $j$ in the first $\lfloor \frac{z}{u}\rfloor$ of these private items in $M_i$, we set $u_{i,j} = u$. 
For the remaining item (if there is one) $j'$ in $M_i$, we set $v_{i,j'} = z - \lfloor \frac{z}{u}\rfloor u < u$.
For all other agents $i' \neq i$, $u_{i',j} = 0$ for all $j \in M_i$. 
Based on this construction, the sum of $u_{i,j}$ over all $j \in M_i$ is equal to $z$, and the maximum bid value in the instance is indeed $u$. 
This completes the construction, which is illustrated in Figure \ref{fig:int-gap}. 

\begin{figure}
\centering
\includegraphics[height=5cm]{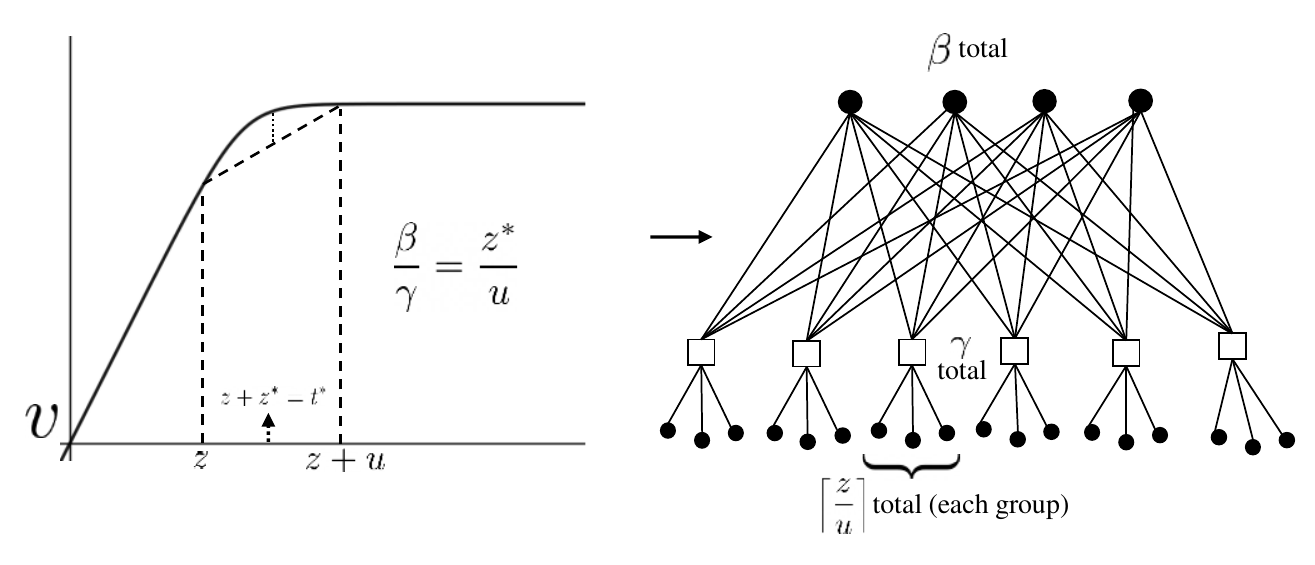}
\caption{{\small Illustration of integrality gap construction. Squares correspond to agents and circles correspond to items.}}
\label{fig:int-gap}
\end{figure}

\paragraph{Analysis.} Let $\OPT_F$ and $\OPT_I$ denote 
the optimal fractional and integral solutions, respectively, to the $\ICACP$ convex program for the above instance construction. 
We first show that $\OPT_F$ is obtained by evenly splitting the spend of the $\gamma$ agents among the $\beta$ public items. 
To do this, we construct a feasible dual solution $\DUAL$ to the $\ICAD$ program that has objective equal to that of $\OPT_F$.
By Lemma \ref{lem:duality}, it  then follows $\OPT_F$ is an optimal fractional solution. 
Constructing this dual solution will also be useful for showing the desired integrality gap, as it will be easier to relate the objective of $\OPT_I$ to $\DUAL$. Note that to simplify notation, for the remainder of the section we denote $t^* := z + z^*$. 

Observe that in the solution $\OPT_F$ specified above, each agent spends a total of $u\beta/\gamma$ on public items.
By the definitions of $\beta$ and $\gamma$, we have $u\beta/\gamma = z^*$.  Each agent also receives a total spend of $z$ from from their private items.  
Therefore, since there are $\gamma$ agents in the total, the objective of $\OPT_F$ is the following:

\begin{equation}
\label{eq:int-gap-opt}
\OPT_F = \gamma v(z+z^*) = \gamma v(t^*).
\end{equation}
To construct $\DUAL$,  we set $t_i = t^*$ for all agents $i$. 
Since each agent has an identical valuation function $v(\cdot)$, setting variables $p_j = u v'(t^*)$ if $j$ is a public item and setting $p_j = u_{i,j}v'(t^*)$ if $j$ is a private item in $M_i$ produces a feasible solution to $\DUAL$.

To show that the objectives of $\OPT_F$ and $\DUAL$ are the same, first observe that the definitions of $y(\cdot)$ and $v'(\cdot)$, we have the following identities: 
\begin{gather} 
\label{eq:int-gap-dual-1}
y(t^*) + (z+u)v'(t^*) = v(t^*) + (u-z^*)v'(t^*) \\
\label{eq:int-gap-dual-2}
y(t^*) + zv'(t^*) = v(t^*) - z^*v'(t^*).
\end{gather}
 Equations \eqref{eq:int-gap-dual-1} and \eqref{eq:int-gap-dual-2} give two equivalent ways of expressing the value  
of the line tangent to $v(\cdot)$ at $t^*$ evaluated at $x$-coordinates $z+u$ and $z$, respectively. Geometrically, on the LHS, we start at the $y$-intercept of the line, 
and then follow the slope of the tangent for a total $x$-width of $z+u$ (resp.\ $z$). On the RHS, we instead start at $v(t^*) = v(z+z^*)$ (i.e., the tangent point)
and follow the tangent line to $z+u$ (resp.\ backwards to $z$).

From the construction of the instance and the definition of $\DUAL$, the objective of $\DUAL$ can be characterized as follows:
\begin{align*}
\sum_i \left(y(t^*) + \sum_{j \in M_i} p_j\right) + \sum_{j \in M_{\pub}}p_j & = \gamma  (y(t^*) + z  v'(t^*)) +  \beta u v'(t^*) \\ 
& = \beta \cdot (y(t^*) + (z +u) v'(t^*)) +  (\gamma - \beta) \cdot (y(t^*) + zv'(t^*)).
\end{align*}
By applying Equations \eqref{eq:int-gap-dual-1} and \eqref{eq:int-gap-dual-2} to the RHS above, it follows the RHS is equal to: 
\begin{equation}
\label{eq:dual-gap-obj}
\beta \cdot (v(t^*) + (u-z^*)v'(t^*)) + (\gamma-\beta) \cdot (v(t^*)-z^* v'(t^*)), 
\end{equation}
which can simplified as follows:
\begin{align*} 
\gamma v(t^*) + v'(t^*) \cdot (\beta(u-z^*) -(\gamma-\beta)z^*)  & = \gamma v(t^*) + v'(t^*) \cdot (\beta u - \gamma z^*)  \\
& = \gamma v(t^*) \\
& = \OPT_F
\end{align*}
where the second equality follows since $\beta/\gamma = z^*/u$, and the last equality follows from \eqref{eq:int-gap-opt}.
Thus $\OPT_F$ and $\DUAL$ have equivalent objective values.

Now consider $\OPT_I$, which is obtained by assigning a unique public item to $\beta$ of the $\gamma$ agents.
(By a simple exchange argument, the objective cannot increase by assigning multiple public items to the same agent, since $v(\cdot)$ is monotone and concave.)
Each agent that receives a public item spends a total of $z + u$. 
The remaining $\gamma - \beta$ agents spend only a  total of $z$ from solely their private items. 
Thus the objective of the optimal integral solution is:

\begin{equation}
\label{eq:gap-int-obj}
\OPT_I = \beta v(z+u) + (\gamma-\beta) v(z).
\end{equation}

By definition of $\mu$, the gap between $v(z+b)$ and $v(z)$, and the (respective) points 
characterized by Equations \eqref{eq:int-gap-dual-1} and \eqref{eq:int-gap-dual-2} are equal to $\mu$. More specifically, we have
\begin{equation}
\label{eq:gap-int-char}
 \mu = \frac{v(t^*) + (u-z^*)v'(t^*)}{v(z+u)} = \frac{v(t^*) - z^*v'(t^*)}{v(z)}.
\end{equation}

Since \eqref{eq:dual-gap-obj} expresses the objective value of $\DUAL$ (and therefore $\OPT_F$, as well), 
Equations \eqref{eq:dual-gap-obj}, \eqref{eq:gap-int-obj}, and \eqref{eq:gap-int-char} together imply $\OPT_F = \mu\OPT_I$, i.e., the integrality gap of this instance is $\mu$.

\section{Applications}
\label{sec:apps}


\subsection{Constant Approximation for Asymmetric Nash Welfare with Smooth Valuations}
\label{subsec:smooth-anw}
We now apply our techniques to the problem of Nash Welfare Maximization for asymmetric agents with smooth additive additive valuations.
In this problem, each agent $i$ has a weight $\eta_i > 0$, and the goal is to find an allocation that maximizes $ \left(\prod_i (u_i + \omega)^{\eta_i}\right)^{1/\eta}$
where $\eta = \sum_i \eta_i$ is the sum of the agent weights and $\omega \in (0,1]$ denotes the smoothing parameter of the instance.

As discussed in the introduction, observe that we can scale the objective of each 
agent $i$ by $(\max_j u_{i,j})^{-\eta_i}$ without changing the approximation ratio of the algorithm. 
Therefore, wlog for the rest of the section we will assume that $\max_{j}u_{i,j} = 1$ for every agent $i$.
After this scaling, we can think of the smoothing parameter as giving each agent $i$ an initial utility of $\omega \max_{j}u_{i,j}$
at the beginning of the instance. Also for simplicity, throughout the section we assume weights are normalized by dividing them by $\eta$, so $\eta = 1$ (i.e., we bring the $1/\eta$ exponent into each term in the product objective).

\subsubsection{Algorithm Definition}

Our algorithm has a natural combinatorial interpretation  which we call the {\em Weighted Bang-Per-Buck} (WBB) algorithm. To define the algorithm, we first explicitly define the additive curvature parameter $\alpha_i$ in the case where $v_i(u_i) = \eta_i \ln(u_i + \omega)$. Let $\sigma_i(z)$ denote the slope of the lower-bounding secant line that intersects the points $(z, \eta_i\ln(z + \omega))$ and $(z+1, \eta_i\ln(z+\omega + 1))$, given as:
\begin{equation}
\label{eq:sn-slope}
\sigma_{i}(z,1) := \eta_i\ln(z+ \omega + 1) - \eta_i\ln(z+\omega) =\eta_i\ln\left(1 + (z+\omega)^{-1}\right).
\end{equation} 
We then define the local additive curvature bound $\alpha_i$ at $z$ for agent $i$:
\begin{equation}
\label{eq:sn-curv}
\alpha_i(z) := \max_{z^* \in (0, 1)} \left[\eta_i\ln(z + z^* + \omega) - (\eta_i\ln(z + \omega) + z^*\sigma_{i}(z,1))\right]
\end{equation}
The  WBB Algorithm is given below in Algorithm 2.  
Throughout its execution, we adjust a uniform bid $b_i$ each agent $i$ makes for on every item in the instance. The algorithm starts with bids that are underestimates of the optimal dual bids, and thus proceeds by increasing the uniform bid of each agent one at a time, ensuring throughout every item is assigned to a maximum weighted bang-per-buck ratio agent, i.e., an agent that maximizes $(\eta_i u_{i,j})/b_i$. 
The algorithm stops increasing the bid of an agent according an exponential potential function proportional to agent's average {\em unweighted} MBB ratio (which we derive from the while-loop condition from the additive version of the $\ICA$ algorithm given in Section \ref{subsec:additive}).

\begin{algorithm}

\caption{Maximum Weighted Bang-per-buck Algorithm ({\sc WBB})}
Initialize fixed bid $b_i \leftarrow \omega$ for each agent $i$  \\
 \label{alg:WMBB}
Allocate each item $j$ to maximum WBB agent $\arg\max_{i} \left( \frac{ \eta_i u_{i,j}}{b_i} \right)$ \Comment{weighted greedy assignment} \\
\While{there exists an agent $i$ such that $\frac{u_i + \omega}{b_i}< \exp\left(\frac{u_i + \omega}{b_i} - 1 - \alpha_i\right)$}{
    \While{$\frac{u_i + \omega}{b_i}< \exp\left(\frac{u_i + \omega}{b_i} - 1 - \alpha_i\right)$}{
    \eIf{there is an item $j$ assigned to agent $i$ such that $i$ is not $j$'s maximum WBB agent}
    {
     Reassign $j$ to agent $\arg\max_{k}\left(\frac{\eta_{k} u_{k,j}}{b_{k}}\right)$
    }
    {
    Increase agent $i$'s bid to be $b_i \leftarrow \frac{\eta_imb_i}{\eta_i m - \epsilon b_i}$ \\
    }
}  
}
Output resulting allocation $u_i$ for all agents
\end{algorithm}

\subsubsection{Analysis}
To analyze the algorithm, we first argue that the WBB algorithm is equivalent to executing the ICA algorithm for an additive guarantee (as outlined in Section \ref{subsec:additive}). We then derive a closed-form for the local additive curvature $\alpha_i$ in terms of the smoothing parameter $\omega$.

\begin{lemma} 
The WBB algorithm (Algorithm 2) is equivalent to executing the ICA algorithm for an additive guarantee, where in the ICA instance $v_i(u_i) = \eta_i \ln(u_i + \omega)$.
\end{lemma}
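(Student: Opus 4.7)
The plan is to exhibit an explicit bijection between the internal state of the WBB algorithm and that of the additive-version $\ICA$ algorithm from Section~\ref{subsec:additive}, and then check that each line of Algorithm~2 implements the corresponding line of the $\ICA$ algorithm when $v_i(u_i) = \eta_i \ln(u_i + \omega)$. Computing derivatives gives $v_i'(u_i) = \eta_i/(u_i + \omega)$, which suggests the natural correspondence $b_i \leftrightarrow t_i + \omega$, so that $v_i'(t_i) = \eta_i/b_i$ throughout the execution.

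First, I would verify the initial state matches: $t_i = 0$ in $\ICA$ translates to $b_i = \omega$ in WBB, which is exactly Line~1 of Algorithm~2. The greedy initialization (Line~2) in the $\ICA$ algorithm assigns item $j$ to $\arg\max_i u_{i,j} v_i'(t_i) = \arg\max_i (\eta_i u_{i,j}/b_i)$, which is the maximum weighted bang-per-buck agent. An identical calculation handles the in-loop reassignment (Line~6 of both algorithms).

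Next, I would verify equivalence of the while-loop guard. Using $y_i(t_i) = \eta_i \ln(b_i) - \eta_i (b_i - \omega)/b_i$ and writing $r := (u_i + \omega)/b_i$, a short computation gives
\begin{equation*}
D(u_i) - v_i(u_i) \;=\; \eta_i \ln b_i - \eta_i + \eta_i\tfrac{u_i + \omega}{b_i} - \eta_i \ln(u_i + \omega) \;=\; \eta_i\bigl(r - 1 - \ln r\bigr).
\end{equation*}
Thus the $\ICA$ additive guard $D(u_i) - v_i(u_i) > \alpha_i$ rearranges to $r - \ln r > 1 + \alpha_i/\eta_i$, i.e., (after exponentiation) $r < \exp(r - 1 - \alpha_i/\eta_i)$, which is the WBB guard once one checks that the $\alpha_i$ appearing in Algorithm~2 is interpreted consistently with Equations~(17)--(18). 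Here the key point is that Equation~(18) defines $\alpha_i$ precisely as the additive curvature of $v_i(u_i) = \eta_i \ln(u_i + \omega)$, so both algorithms reference the same quantity (modulo the $\eta_i$ factor being tracked as above).

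Finally, I would verify the update on Line~8. The additive $\ICA$ algorithm decreases $v_i'(t_i)$ by the absolute amount $\epsilon/m$; under the bijection this means $\eta_i/b_i^{\text{new}} = \eta_i/b_i - \epsilon/m$, and solving for $b_i^{\text{new}}$ yields
\begin{equation*}
b_i^{\text{new}} \;=\; \frac{\eta_i m\, b_i}{\eta_i m - \epsilon b_i},
\end{equation*}
matching Line~8 of WBB exactly. The main obstacle is bookkeeping for the guard equivalence---specifically tracking where the $\eta_i$ factors land so that the $\alpha_i$ inside the exponential agrees with the additive curvature parameter inherited from the $\ICA$ framework; everything else reduces to the identifications above together with straightforward algebra.
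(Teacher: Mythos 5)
Your proposal is correct and takes essentially the same route as the paper: it identifies the substitution $b_i = t_i + \omega$ (so $v_i'(t_i) = \eta_i/b_i$), derives the explicit form of $D(u_i)$, rearranges and exponentiates the additive guard $D(u_i) - v_i(u_i) > \alpha_i$, and verifies that Line~8's bid update implements the absolute decrease of $v_i'(t_i)$ by $\epsilon/m$. One small remark: you are actually more careful than the paper's write-up about the $\eta_i$ bookkeeping. Your computation $D(u_i) - v_i(u_i) = \eta_i(r - 1 - \ln r)$ with $r = (u_i+\omega)/b_i$ is right, and it yields the guard $r < \exp\bigl(r - 1 - \alpha_i/\eta_i\bigr)$, not $r < \exp(r - 1 - \alpha_i)$ as printed in Algorithm~2; since Equation~\eqref{eq:sn-curv} defines $\alpha_i$ with the $\eta_i$ factor attached, the discrepancy you flag is a real (if benign) typo in the paper's pseudocode, and the paper's phrase ``after canceling $\eta_i$ terms'' is what papers over it. Your derivation of the Line~8 update $b_i^{\text{new}} = \eta_i m b_i / (\eta_i m - \epsilon b_i)$ also matches the paper's verification that $v_i'(t_i)$ drops by exactly $\epsilon/m$.
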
 

\begin{proof} 
By Lemma \ref{lem:duality}, the  primal and dual program for an ICA instance with $v_i(u_i) = \eta_i \ln(u_i + \omega)$ is given by the following (denoted {\sc ASN-CP} for ``Asymmetric Smooth Nash''):  

\begin{center}
\begin{tabular}{c  c  c | c  c} 
\hspace{5mm} & 
$
\begin{gathered}
    \textnormal{({\sc ASN-CP}):}  \max \sum_i \eta_i\ln(u_i+\omega) \\
    \forall i: u_i = \sum_j u_{i,j}x_{i,j} \\
    \forall j: \sum_i x_{i,j} \leq 1 \\
    \forall i,j: x_{i,j} \geq 0 \\
\end{gathered}
$
& \hspace{1mm} & \hspace{1mm} & 
\vspace{-4mm}
$
\begin{gathered}
    \textnormal{({\sc ASN-D}):} \min \sum_i \eta_i \left(\ln(t_i+ \omega) - \frac{t_i}{t_i+\omega}\right) + \sum_j \beta_j \\
    \forall i,j: \beta_j \geq \frac{\eta_i u_{i,j}}{t_i + \omega} \\
    \forall i,j: t_i, \beta_j \geq 0 \\
\end{gathered}
$
\end{tabular} \\
\end{center} 
\vspace{5mm}



Note that for this application, we denote the dual variable $p_j$ as $\beta_j$, since it is interpreted as the weighted MBB ratio, not the price. In particular, in the WBB algorithm, we substitute the $t_i + \omega$ terms in the {\sc ASN-D} program to be the uniform bid $b_i$ made by agent $i$ for all items. Thus the function $D(u_i)$ becomes: 

\begin{equation} 
\label{eq:Dlogdef}
D(u_i) = \eta_i\left[\frac{u_i + \omega}{b_i} + \ln(b_i)  -1 \right].
\end{equation}

Rearranging \eqref{eq:Dlogdef} the while-loop condition in Algorithm \ref{alg:pd} (which is $v_i(u_i) - D(u_i) > \alpha_i$ for the general additive $\ICA$ algorithm) and  exponentiating, we obtain the while-loop condition in Algorithm~\ref{alg:WMBB} after canceling $\eta_i$ terms. Furthermore, since WBB algorithm maintains an assignment where each item is assigned to the agent with maximum weighted MBB ratio, the variables $t_i = b_i - \omega$ and $\beta_j = \arg\max_{i} \left( \frac{ \eta_i u_{i,j}}{b_i} \right)$ form a feasible dual solution in a manner identical to the ICA algorithm. 
Finally, it is easily seen that the update to bid $b_i$ decreases $v'_i(t_i)$ by $\epsilon/m$:

\begin{equation*}
v_i'^{(2)}(t_i) = \frac{\eta_i}{\frac{\eta_imb_i}{\eta_i m - \epsilon b_i}} = \frac{\eta_i}{b_i} - \frac{\epsilon}{m} = v_i'^{(1)}(t_i) - \frac{\epsilon}{m},
\end{equation*}
where $v_i'^{(2)}(t_i)$ and $v_i'^{(1)}(t_i)$ denote the $v_i'(t_i)$ before and after the update to $b_i$ (respectively). 
\end{proof}

\begin{lemma}
\label{lem:nsw-diff-gap}
The local additive curvature $\alpha_i$ for agent $i$ is given by:

\begin{equation*} 
\alpha_i = \eta_i  \left[ \ln\left(\frac{1}{\omega \ln(1+1/\omega)}\right) + \omega \ln(1+1/\omega) - 1 \right]= O\left(\ln\left(\frac{\eta_i}{\omega\ln(1+\omega)}\right)\right),
\end{equation*}
when valuation function of agent $i$ is $v_i(u_i) = \eta_i \ln(u_i + \omega)$.
\end{lemma}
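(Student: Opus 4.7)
The plan is to compute $\alpha_i(z)$ explicitly by carrying out the inner maximization over $z^*$, and then to show that the resulting function of $z$ is monotone decreasing, so that the outer maximum is attained at $z=0$.

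First I would fix $z \geq 0$, set $s := z+\omega$ for brevity, and write
\[
F(z^*) \;=\; \eta_i\ln(s+z^*) - \eta_i\ln s - z^*\,\eta_i\ln\!\bigl(1+1/s\bigr).
\]
Since $v_i$ is strictly concave and the secant lies strictly below $v_i$ on $(0,1)$, $F$ is strictly concave in $z^*$, so its unique maximizer is characterized by $F'(z^*)=0$, i.e.\ by
\[
\frac{\eta_i}{s+z^*} \;=\; \eta_i\ln\!\bigl(1+1/s\bigr),
\]
giving $s+z^* = 1/L(s)$ where $L(s) := \ln(1+1/s)$. Plugging this optimal $z^*$ back into $F$ and simplifying yields
\[
\alpha_i(z) \;=\; \eta_i\bigl[-\ln(sL(s)) + sL(s) - 1\bigr] \;=\; \eta_i\,g(s), \quad g(s) := -\ln h(s) + h(s) - 1,
\]
where $h(s) := sL(s) = s\ln(1+1/s)$. (I would also note that the optimal $z^*$ does lie in $(0,1)$ for $s = \omega \in (0,1]$, using the bounds on $h$ established next.)

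The main step is then to show that $g$ is strictly decreasing on $(0,\infty)$, so that the outer maximum of $\alpha_i(z)$ over $z\geq 0$ is achieved at $z=0$, i.e.\ at $s=\omega$. Computing $g'(s) = (1 - 1/h(s))\,h'(s)$, I would prove two classical inequalities:
\begin{itemize}
\item $h(s) < 1$ for all $s > 0$: this follows from $\ln(1+x) < x$ with $x = 1/s$, so $1 - 1/h(s) < 0$.
\item $h'(s) > 0$ for all $s>0$: differentiating, $h'(s) = \ln(1+1/s) - 1/(s+1)$, which is positive by the standard bound $\ln(1+x) > x/(1+x)$ applied to $x = 1/s$.
\end{itemize}
Combining these, $g'(s) < 0$ on $(0,\infty)$, so $g$ is strictly decreasing. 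Therefore
\[
\alpha_i \;=\; \max_{z\geq 0}\alpha_i(z) \;=\; \eta_i\, g(\omega) \;=\; \eta_i\!\left[\ln\!\frac{1}{\omega\ln(1+1/\omega)} \,+\, \omega\ln(1+1/\omega) \,-\, 1\right],
\]
which is the claimed closed form.

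For the big-$O$ estimate, I would observe that since $\omega \in (0,1]$ we have $h(\omega) = \omega\ln(1+1/\omega) \in (0,1)$, so $h(\omega) - 1$ is a bounded constant and the $\ln(1/(\omega\ln(1+1/\omega)))$ term dominates as $\omega \to 0$, yielding the stated asymptotic bound. The only genuine obstacle is the monotonicity of $g$; once the two elementary inequalities for $h$ and $h'$ are in hand, everything else is routine substitution.
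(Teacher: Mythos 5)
Your proposal is correct and takes essentially the same route as the paper: fix $z$, use concavity of the inner objective to solve the first-order condition $s+z^* = 1/\ln(1+1/s)$, substitute back to get $\alpha_i(z) = \eta_i\bigl[-\ln(sL(s)) + sL(s) - 1\bigr]$, and then argue that this expression is decreasing in $s = z+\omega$ so the outer maximum sits at $z=0$. The only difference is that the paper dismisses the final monotonicity step with ``it can be verified that the derivative \ldots is negative,'' whereas you actually verify it by writing $g'(s) = (1 - 1/h(s))\,h'(s)$ and invoking the two elementary bounds $\ln(1+x) < x$ and $\ln(1+x) > x/(1+x)$; you also explicitly check that the stationary point $z^*$ falls inside $(0,1)$, which the paper leaves implicit. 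Those additions are welcome but do not change the structure of the argument.
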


\begin{proof} 

For a fixed value $z$, let $\gamma = z + \omega$ and let $\Delta(z^*)$ be a function of $z^*$ defined by the expression inside the $\max$ in Equation \eqref{eq:sn-curv} for $\alpha(z)$, given as:

\begin{equation*}
\Delta(z^*) = \eta_i\left(\ln(z^* + \gamma) - \ln \gamma - z^*\ln(1+1/\gamma)\right).
\end{equation*}
Observe $\Delta(z^*)$ is a concave function in $z^*$, since the term $\ln(z^* + \gamma)$ is a concave function of $z^*$
and the remainder of the expression is a linear function in $z^*$. 
Therefore its maximum is obtained when $\frac{d}{dz^*} \Delta(z^*) = 0.$ 
By basic calculus, it follows the maximizer $z_{\max}$ for Equation \eqref{eq:sn-curv}
is given by:

\begin{equation} 
\label{eq:zmax}
z_{\max} = \arg\max_{z^* \in (0, 1)} \Delta(z^*)  =  \frac{1}{\ln\left(1 + 1/\gamma \right)} - \gamma.
\end{equation}
Thus $\alpha_i(z)$ can be then be expressed in a
closed form by plugging in $z_{\max}$ from Equation \eqref{eq:zmax}
in for $z^*$, which can be simplified as:

\begin{equation*} 
\alpha_i(z) = \eta_i\left[\ln\left(\frac{1}{\gamma \ln(1+1/\gamma)}\right) + \gamma \ln\left(1+\frac{1}{\gamma}\right) - 1\right].
\end{equation*}

It can be verified that the derivative of $\alpha_i(z)$ with respect to $\gamma$ is negative for all $\gamma > 0$. Therefore since $\gamma = \omega + z$, 
the derivative of $\alpha_i(z)$ is also negative for all $z \geq 0$, so $\alpha_i(z)$ is maximized at $z = 0$. Since $\gamma = \omega$ when $z= 0$, the lemma follows.  
\end{proof}

We can now prove Theorem \ref{thm:snsw}.
\begin{proof}[Proof of Theorem~\ref{thm:snsw}]

By Theorem \ref{thm:add-alg-bound}, the algorithm achieves the desired run-time bound, since $v_i'(0) = \eta_i/\omega \leq 1/\omega$ (recall we normalized agent weights to be $\eta_i/\eta$) and the update on Line 8 in WBB takes $O(1)$ time.
Thus, we are left with bounding the approximation ratio of the algorithm for the product objective.  

Let $\POPT$ and $\LOPT$ denote the objective value of the optimal solution for the product and log objective, respectively.  
Consider the dual variables $(\beta_j, t_i)$ corresponding to the allocation returned by the algorithm.
Also by Theorem \ref{thm:add-alg-bound}, $\beta_j + \epsilon/m$ is a feasible dual solution to the dual program {\sc ASN-D}. 
Thus by Lemma \ref{lem:duality} we have:

\begin{align}
    \text{{\sc ASN-D}}(t,\beta) & = \sum_i \eta_i \left(\ln(t_i+ \omega) - \frac{t_i}{t_i+\omega}\right)+ \sum_j \left(\beta_j + \frac{\epsilon}{m}\right) \notag  \\
    &= \sum_i D(u_i) + \epsilon \geq \LOPT \label{eq:sn-dual-bound}. 
\end{align}

When the algorithm terminates we have $\eta_i\ln(u_i + \omega) \geq D(u_i)  - \alpha_i$ for every agent $i$. Along with Inequality \eqref{eq:sn-dual-bound}, 
this implies the total objective of the algorithm is bounded by: 

\begin{equation*}
\sum_i \eta_i \ln(u_i+\omega) \geq \sum_i (D(u_i) - \alpha_i) \geq \LOPT  - \sum_i \alpha_i - \epsilon.
\end{equation*}
From this inequality, and the  fact that $\LOPT = \ln\left(\POPT\right)$ (when weights are scaled such $\eta = 1$), it follows the algorithm's objective on the product objective is bounded by: 
\begin{alignat}{2}
\prod_i (u_i + \omega)^{\eta_i} = 
\exp\left(\sum_i \ln(u_i + \omega)\right) & \geq \exp\left(\LOPT - \sum_i \alpha_i -\epsilon  \right) \notag \\
& = \exp\left(- \sum_i \alpha_i -\epsilon  \right)\POPT. \label{eq:nsw-final}
\end{alignat}
From Lemma \ref{lem:nsw-diff-gap}, we have that $\sum_i \alpha_i  = O\left(\ln\left(\frac{1}{\omega\ln(1+\omega)}\right)\right)$, it follows $\exp\left(\sum_i \alpha_i +\epsilon\right) = O(e^{\epsilon}/(\omega\ln(1+1/\omega)))$. Therefore by rearranging the above inequality \eqref{eq:nsw-final}, the theorem is established. 
\end{proof}

\subsubsection{Extension to Smooth Piecewise-Linear Valuation}
\label{subsub:anw-pl}
The results in this section can be also extended to  piecewise-linear valuation, i.e., maximizing $\left(\prod_i (f_i(u_i + \omega))^{\eta_i}\right)^{1/\eta}$ where $f_i(\cdot)$ is a piecewise-linear function.  In the next section (Section \ref{subsec:app-pl-welfare}) we argue the multiplicative curvature $\mu_i$ is at most $4/3$ for all such functions. Combining arguments in this section  with that of Lemma \ref{lem:nsw-diff-gap} we can obtain the following result.

\begin{theorem} 
\label{thm:anw-pl}
Consider an ICA instance with $v_i(u_i) = \eta_i \ln(f_i(u_i + \omega))$ where $f_i(u_i)$ is a piecewise-linear function concave-additive function with $\min_{k}\left(x_{i,k+1} - x_{i,k}\right) \geq \max_{j} u_{ij}$.
If $\omega = \Omega(1)$, then there exists a polynomial-time algorithm for the problem with an $O(1)$ approximation factor.
\end{theorem}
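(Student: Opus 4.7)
The plan is to apply the additive ICA algorithm (Theorem \ref{thm:add-alg-bound}) to $v_i(u_i) = \eta_i\ln(f_i(u_i + \omega))$, which is monotone and concave since both $f_i$ and $\ln$ are, and then translate the additive guarantee in the log domain into a multiplicative approximation for $\prod_i (f_i(u_i+\omega))^{\eta_i}$ by exponentiation, exactly as in the proof of Theorem \ref{thm:snsw} (see \eqref{eq:nsw-final}). The resulting multiplicative ratio is $\exp(\sum_i \alpha_i + \epsilon)$, and after normalizing weights so $\sum_i \eta_i = 1$ it suffices to show $\alpha_i = O(\eta_i)$ for every agent $i$, since then $\sum_i \alpha_i = O(1)$.

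To bound $\alpha_i$, fix $z \geq 0$ and $z^* \in (0, w)$ for $w = \max_j u_{i,j}$; by the width hypothesis $w \leq \min_k(x_{i,k+1} - x_{i,k})$, the interval $[z+\omega, z+w+\omega]$ meets at most two consecutive linear pieces of $f_i$. I would decompose the overshoot $v_i(z+z^*) - v_i(z) - z^* \sigma_i(z,w)$ into two contributions. First, Theorem \ref{thm:pl-mult} applied to the piecewise-linear $f_i$ (which satisfies the same width hypothesis) bounds its multiplicative curvature by $4/3$, so $f_i(z+z^*+\omega) \leq \tfrac{4}{3}\bigl(f_i(z+\omega) + z^*[f_i(z+w+\omega)-f_i(z+\omega)]/w\bigr)$; applying $\eta_i\ln(\cdot)$ contributes at most $\eta_i\ln(4/3)$. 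Second, the remaining gap is the difference between $\eta_i\ln$ of the linear interpolation of $f_i$ over $[z+\omega, z+w+\omega]$ and the chord of $\eta_i \ln f_i$ over the same endpoints; by concavity of $\ln$, the former always lies above the latter, and parameterizing by $r = z^*/w$ and $\rho = f_i(z+w+\omega)/f_i(z+\omega)$ reduces this gap to $\eta_i[\ln(1 + r(\rho-1)) - r\ln\rho]$ maximized over $r \in [0,1]$.

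The key remaining step is to show $\rho = O(1)$ when $\omega = \Omega(1)$, which would make the second contribution $O(\eta_i)$ by a computation analogous to Lemma \ref{lem:nsw-diff-gap}. Here concavity and non-negativity of $f_i$ (the tangent-line inequality at $y$ evaluated at $0$ gives $f_i(y) \geq y f_i'(y)$) yield $f_i'(z+\omega)/f_i(z+\omega) \leq 1/(z+\omega) \leq 1/\omega$, so $\rho \leq 1 + w/\omega = O(1)$. The same tangent bound gives $v_i'(0) = \eta_i f_i'(\omega)/f_i(\omega) = O(1/\omega)$, so Theorem \ref{thm:add-alg-bound} yields polynomial running time for constant $\omega$ and $\epsilon$. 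The main obstacle is verifying that the two-part decomposition is in the correct direction at each step — in particular that the log-of-linear-interpolation of $f_i$ exceeds the linear-interpolation-of-log-of-$f_i$, so that the $\ln(4/3)$ slack from the piecewise bump composes \emph{additively} with the smooth-log curvature term rather than being cancelled by it.
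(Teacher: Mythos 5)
The paper does not actually prove Theorem~\ref{thm:anw-pl}; it defers the proof to the full version and offers only the one-line hint that Lemma~\ref{lem:nsw-diff-gap} can be adapted. Your proposal is a concrete and, as far as I can tell, correct realization of that hint, so there is nothing in the paper to contradict it.

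The one worry you raise at the end is not in fact a gap. Writing $g(x) = f_i(x+\omega)$, $L$ for the chord of $g$ over $[z,z+w]$, and $C$ for the chord of $\eta_i\ln g$, the identity
\[
\ln g(z+z^*) - C(z+z^*)/\eta_i \;=\; \bigl[\ln g(z+z^*) - \ln L(z+z^*)\bigr] \;+\; \bigl[\ln L(z+z^*) - C(z+z^*)/\eta_i\bigr]
\]
is algebraically exact, and both brackets are individually non-negative: the first because the concave $g$ lies above its chord $L$, and the second because $\ln L(x)$ is concave in $x$ and agrees with $C/\eta_i$ at both endpoints $z$ and $z+w$, so it lies above its own chord. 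Hence no cancellation between the two contributions is possible, and bounding each separately is valid: the first term is at most $\ln(4/3)$ by Lemma~\ref{fact:budget-3/4} and its piecewise-linear extension (applied to $f_i$ at the shifted point $z+\omega$, which is legitimate since translating the argument preserves segment widths), and the second term in your $(r,\rho)$ parameterization is exactly the quantity computed in Lemma~\ref{lem:nsw-diff-gap} when $f_i$ is the identity, and is bounded in terms of $\rho$. Your bound $\rho \leq 1 + w/\omega$ via concavity and non-negativity of $f_i$ is correct; just make explicit that you are invoking the normalization $\max_j u_{i,j}=1$ from Section~\ref{subsec:smooth-anw} so that $w/\omega = O(1)$ when $\omega = \Omega(1)$. (Also note you only need $h(\rho)$ to be bounded on $[1, 1+w/\omega]$, which holds by continuity, not monotonicity of $h$.) With $\alpha_i = O(\eta_i)$ and $\sum_i\eta_i = 1$, the exponentiation step from the proof of Theorem~\ref{thm:snsw} gives the claimed $O(1)$ factor, and the bound $v_i'(0) = \eta_i f_i'(\omega)/f_i(\omega) = O(1/\omega)$ handles the runtime via Theorem~\ref{thm:add-alg-bound}. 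This is a complete and arguably more transparent argument than the paper's unwritten one, because it isolates cleanly the $\ln(4/3)$ contribution from the piecewise-linear kink and the smoothing-dependent contribution from the outer logarithm.
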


The extension works since Lemma \ref{lem:nsw-diff-gap} can be adapted to derive an additive curvature $\alpha_i$ that is a constant if $\omega = \Omega(1)$.
For example, when $f_i(u_i) = \min(u_i, c_i)$ is a budget-additive function, we obtain an approximation of $\approx  1.154$ as $\epsilon \rightarrow 0$ and $\omega = 1$.
We defer the proof of this result to the full version of the paper.


\subsection{Utilitarian Welfare Maximization for Piecewise-Linear Concave Utilities}
\label{subsec:app-pl-welfare}


In this section we establish Theorem \ref{thm:pl-mult}, we analyze our multiplicative algorithm for $\ICA$ in the  special case where every valuation function $v_i(\cdot)$ is monotone piecewise-linear concave function. 
In this setting, each $v_i(\cdot)$ is defined over series of conjoined {\em segments} $\ell_{i,1}, \ell_{i,2}, \ldots, \ell_{i, \lambda_i}$. Let
\[
0 = x_{i,0} < x_{i,1} < x_{i,2} < \cdots < x_{i,\lambda_i -1}
\]
denote the transition points between the segments of $v_i(\cdot)$, where $x_{i,k}$ denotes the transition point between segments $\ell_{k}$ and $\ell_{k+1}$ along the $x$-axis. Also, let 
\[
v'_{i,1} > v'_{i,2} > \cdots > v'_{i,\lambda_i } \geq 0
\]
denote the slopes of the segments, where $v'_{i,k}$ denotes the slope of segment $\ell_{i,k}$ (the inequalities follow since $v_i(\cdot)$ is monotone and concave).
Thus, if $u_i \in [x_{i,k}, x_{i,k+1}]$, then we can express agent $i$'s valuation formally as $v_i(u_i) = v_i(x_{i,k}) + v'_{i,k}\cdot(u_i - x_{i,k})$.
We further assume that $\min_{k \in [\lambda_i-1]}\left(x_{i,k+1} - x_{i,k}\right) \geq \max_{j} u_{ij}$, i.e., the maximum additive utility earned from any one item is at most the length of any segment. Recall that for the special case of $v_i(u_i) = \min(u_i, c_i)$ for some budget $c_i$, this corresponds to the standard assumption that every agent $i$'s utility for each item is at most her budget.


In the definition of Algorithm \ref{alg:pd} in Section \ref{sec:ica-algo}, we assumed that every valuation $v_i$ is differentiable. Although a piecewise-linear $v_i(u_i)$ as defined above is not differentiable at each transition point $x_{i,k}$, Algorithm \ref{alg:pd} can be easily adapted to this setting by using supergradients in place of of tangents (we defer a complete definition of this adaptation to the full version).

Thus to establish Theorem \ref{thm:pl-mult}, it suffices to show the local multiplicative curvature $\mu_i$ for all such piecewise-linear functions is at most $4/3$. We first prove that the local multiplicative curvature of the budget-additive valuation $v_i(u_i) = \min(u_i, c_i)$ exactly $4/3$.
We then show the relaxing $v_i$ to be a general piecewise-linear function does not increase its value of $\mu_i$.


\begin{lemma} \label{fact:budget-3/4}
The local multiplicative curvature $\mu_i$ of $v_i(u_i) = \min(u_i,c_i)$ is exactly $4/3$.
\end{lemma}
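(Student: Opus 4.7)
The plan is to identify the only geometric configuration that can give a nontrivial ratio, solve the inner maximum over $z^*$ using a monotonicity argument, and then carry out a two-variable optimization.

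First I would reduce to the case where the interval $[z, z+w]$ straddles the kink at $c_i$. Since $v_i(u) = \min(u, c_i)$ is affine on each of $[0, c_i]$ and $[c_i, \infty)$, for any $z, w$ with $[z, z+w]$ contained in a single piece the function coincides with its own secant on that interval, so $\mu_i(z, w) = 1$. Hence I may assume $z < c_i < z + w$ and set $a := c_i - z > 0$ and $b := z + w - c_i > 0$, so that $w = a + b$. A direct computation from the definition of $\sigma_i$ gives $\sigma_i(z, w) = a/(a+b) \in (0, 1)$.

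Next I would resolve the inner $\max$ over $z^* \in (0, w)$ by splitting at the kink $z^* = a$. On $(0, a]$ both the numerator $v_i(z + z^*) = (c_i - a) + z^*$ and the denominator $v_i(z) + z^*\sigma_i(z, w) = (c_i - a) + z^*\sigma_i(z, w)$ are affine in $z^*$, share the common value $c_i - a$ at $z^* = 0$, but the numerator grows with slope $1$ while the denominator grows with the strictly smaller slope $\sigma_i(z, w)$; hence the ratio is strictly increasing on $(0, a]$. On $[a, w)$ the numerator is constant at $c_i$ while the denominator keeps increasing, so the ratio is strictly decreasing there. The maximum is therefore attained at the kink $z^* = a$, giving
\[
\mu_i(z, w) \;=\; \frac{c_i}{(c_i - a) + a \cdot \tfrac{a}{a+b}} \;=\; \frac{c_i}{c_i - \tfrac{ab}{a+b}}.
\]

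Finally I would optimize the free parameters. Under the budget-additive assumption $\max_j u_{i,j} \leq c_i$, every admissible $w$ satisfies $w = a + b \leq c_i$, and by AM-GM $ab/(a+b) \leq w/4$ with equality at $a = b = w/2$. Thus $\mu_i(z, w) \leq c_i/(c_i - w/4)$, and this upper bound is monotonically increasing in $w$, so it is maximized over $w \in (0, c_i]$ at $w = c_i$, yielding the value $c_i/(3c_i/4) = 4/3$. The item utility $u_{i,j} = c_i$ paired with $z = c_i/2$ attains this bound, so $\mu_i = 4/3$ exactly. The only step that requires genuine care is the monotonicity argument for the inner max, where one must handle the two subintervals around the kink separately; the outer optimization is a one-line AM-GM computation.
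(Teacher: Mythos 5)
Your proof is correct and shares the paper's overall strategy (reduce to the kink, then optimize over the remaining free parameters), but your execution is cleaner and more elementary. The paper computes derivatives of the ratio $\Delta_z$ to locate the inner maximizer at $z + z^* = c_i$, then writes $\mu_i(z,u_{i,j})$ in closed form and differentiates again over $z$ and $u_{i,j}$; incidentally, the paper's expressions (Equations (25)--(26)) appear to carry a typo, writing $\tfrac{u_{i,j}-z}{u_{i,j}}$ where the secant slope should be $\tfrac{c_i-z}{u_{i,j}}$, though the final answer is unaffected. You instead introduce the substitution $a = c_i - z$, $b = z + w - c_i$, settle the inner max by a direct slope comparison on the two affine pieces (no derivatives needed), obtain the clean form $\mu_i(z,w) = c_i/(c_i - ab/(a+b))$, and dispatch the outer optimization with AM--GM followed by a one-line monotonicity-in-$w$ observation. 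This buys you two things: the $(a,b)$ symmetry makes it transparent that the extremal $z$ places the kink at the midpoint of $[z, z+w]$, and you avoid the implicit two-variable calculus step where the paper asserts $z = c_i/2$ before having fixed $u_{i,j} = c_i$. Both proofs reach $4/3$ with the same tight configuration $z = c_i/2$, $w = c_i$.
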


\begin{proof}
For a fixed value $z$, let $\Delta_z(z^*, u_{i,j})$ be a function of $z^*$ defined by the expression inside the $\max$ in Equation \eqref{eq:mult-lcb} for $\mu_i(z, u_{i,j})$. 
For all $z+z^* \leq c_i$, $\Delta_z(z^*, u_{i,j})$ is given by:
\begin{equation} \label{eq:budget-gap}
\Delta_z(z^*, u_{i,j})  = \frac{z+z^*}{z+ z^*\left(\frac{u_{i,j}-z}{u_{i,j}}\right)}
\end{equation}

One can verify the derivative of $\Delta_z(z^*, u_{i,j})$ with respect to $z^*$ is positive for $z^* + z \leq c_i$, and also that the derivative of the expression for $\Delta(z^*, u_{i,j})$ when $z + z^* \geq c_i$ is negative. Thus it follows for a fixed $z$, $\Delta_z(z^*, u_{i,j})$ is maximized when $z+z^* = c_i$. 
Therefore, 
we can express $\mu_i(z, u_{i,j})$ as the following function of $z$:
\begin{equation} \label{eq:budget-gap-2}
\mu_i(z,u_{i,j})   = \frac{c_i}{z+ (c_i - z)\left(\frac{u_{i,j}-z}{u_{i,j}}\right)}
\end{equation}

The above expression is maximized when $z = c_i/2$. The expression is also strictly increasing in $u_{i,j}$; therefore by our assumption that $\max_{j} u_{ij} \leq c_i$, the expression is maximized with respect to $u_{i,j}$ when $u_{i,j} = c_i$. 
Thus plugging in $z = c_i/2$ and $u_{i,j} = c_i$, we obtain $\max \mu_i(z,u_{i,j}) = c_i/(c_i/2 + c_i/4) = 4/3$, as desired. 
\end{proof}

\begin{lemma}
If $v_i(\cdot)$ is piecewise-linear, concave, non-decreasing, and satisfies $\min_{k}\left(x_{i,k+1} - x_{i,k}\right) \geq \max_{j} u_{ij}$, then its multiplicative curvature $\mu_i$ is at most $4/3$.
\end{lemma}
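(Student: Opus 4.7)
The plan is to reduce the general piecewise-linear case to the budget-additive calculation already established in Lemma~\ref{fact:budget-3/4}. I would start by observing that the segment-length hypothesis forces the interval $[z, z+w]$ used in defining $\mu_i(z, w)$ to cross at most one transition of $v_i$. If it crosses none, $v_i$ is linear on $[z, z+w]$ and $\mu_i(z, w) = 1$. Otherwise, let $c \in (z, z+w)$ be the unique interior transition, let $s_1 > s_2 \geq 0$ be the slopes immediately to its left and right, and set $\ell = c - z$, $r = z + w - c$, $h = v_i(z)$. A routine derivative check on each of the two linear pieces shows that the inner maximum in \eqref{eq:mult-lcb} is attained at the kink $z^* = \ell$, yielding the closed form
\[
\mu_i(z, w) \;=\; \frac{(h + s_1 \ell)(\ell + r)}{h(\ell + r) + s_1 \ell^2 + s_2 \ell r}.
\]

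Next I would carry out two structural reductions. The right-hand side is non-increasing in $s_2$, so the worst case is $s_2 = 0$; this makes the local shape of $v_i$ on $[z, z+w]$ look budget-additive, as in Lemma~\ref{fact:budget-3/4}. For the crucial bound on $h$, I would use concavity of $v_i$ together with $v_i(0) \geq 0$: the tangent from the left at $c$ has slope $s_1$ and lies weakly above $v_i$, so $v_i(0) \leq v_i(c) - s_1 c$, giving $h \geq s_1 z$. Separately, the segment $[x_{i,k-1}, c]$ containing $[z, c]$ has length at least $w$ by hypothesis, so its left endpoint satisfies $x_{i,k-1} \leq c - w = z - r$; combined with $x_{i,k-1} \geq 0$ this yields $z \geq r$, and hence $h \geq s_1 r$.

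With $s_2 = 0$ and $h \geq s_1 r$, scaling so that $s_1 = 1$ gives $\mu_i(z, w) = (h + \ell)(\ell + r)/(h(\ell + r) + \ell^2)$, which is non-increasing in $h$, so the tightest bound is at $h = r$. Substituting produces $\mu_i(z, w) \leq (\ell + r)^2 / (\ell^2 + \ell r + r^2)$, and setting $\alpha = \ell/(\ell + r)$ rewrites this as $1/(\alpha^2 - \alpha + 1)$, which is maximized at $\alpha = 1/2$ with value $4/3$. The main obstacle will be the step extracting $z \geq r$ from the segment-length hypothesis: this is the one place where the global shape of $v_i$ interacts with the purely local ratio calculation, and without it $\mu_i$ could in principle exceed $4/3$ (as happens, e.g., when the secant spans multiple transitions).
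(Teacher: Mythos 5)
Your proof is correct and shares the same high-level strategy as the paper's: show that the worst ratio occurs at the interior kink, reduce to a budget-additive-like shape by setting $s_2 = 0$, and conclude $4/3$. The difference is in how the final step is executed. The paper, after flattening $\ell_2$, simply ``applies Lemma~\ref{fact:budget-3/4},'' but that lemma is stated for $v(u)=\min(u,c_i)$, and the flattened piecewise function is only affinely equivalent to that shape --- in particular, one still needs to check that the parameter constraints that force the $4/3$ bound (effectively that the kink is at least $w$ from where $\ell_1$ would hit zero) are actually implied by the segment-length hypothesis and non-negativity of $v_i$. You make this reduction explicit: the closed form $\mu_i(z,w)=\frac{(h+s_1\ell)(\ell+r)}{h(\ell+r)+s_1\ell^2+s_2\ell r}$, the monotonicity in $s_2$ and $h$, the concavity/non-negativity bound $h\ge s_1 z$, and --- crucially --- the derivation of $z\ge r$ from the segment-length hypothesis via $x_{i,k-1}\ge 0$. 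That last step is exactly the place the paper glosses over, and you correctly identify it as the one point where the global hypothesis enters the purely local computation. The algebra at the end, reducing to $(\ell+r)^2/(\ell^2+\ell r+r^2)$ and maximizing at $\ell=r$, is clean and gives the tight constant. So: same approach and decomposition, but a more self-contained and explicit verification of the final reduction; no gap.
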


\begin{proof}
Let $w = \max_j u_{i,j}$, let $(z,z^*)$ be the maximizers that define $\mu_i$, and let 
$\sigma = \sigma_i(z,w) = v_i(z + w) - v_i(z))/w$ denote the slope of the lower-bounding 
secant line given by Equation \eqref{eq:lcb-slope}.
Thus the equation of the lower bounding secant line $s(x)$ that defines $\mu_i$ is given by 
$s(x) = \sigma \cdot(x-z) + v_i(z)$. Let  $\ell_1, \ell_2$ denote the  segments containing $z$ and $z+w$ fall in, respectively.
By our assumption $\min_{k}\left(x_{i,k+1} - x_{i,k}\right) \geq \max_{j} u_{ij}$, $\ell_1$ and $\ell_2$ must be adjacent segments of the function.


To show the lemma, it suffices to show to show that for all $x \in [z, z+w]$, $v_i(x) / s(x) \leq 4/3$. Similar to proof of Lemma \ref{fact:budget-3/4}, we establish this by showing the largest gap occurs at the intersection of $\ell_1$ and $\ell_2$. Let $\ell_1, \ell_2$ be defined as functions of $x$ by $\ell_1(x) = v'_1x+b_1$ and $\ell_2(x) = v'_2x+b_2$, respectively.
Let $\overline{x}$ denote the $x$ coordinate of their intersection point, i.e., $\ell_1(\overline{x}) = \ell_2(\overline{x})$. Since $v_i(\cdot)$ is concave, we have $v'_1 < \sigma < v'_2$ and $b_1 < v_i(z) - zx < b_2$. For $x \in [z, \overline{x}]$, the multiplicative curvature at $x$ is the ratio between $\ell_1(x)$ and $s(x)$, which is
\[
\Delta(x) = \frac{v'_1x+b_1}{\sigma \cdot (x-z) + v_i(z)}.
\]
Since the derivative of $\Delta(x)$ is positive with respect to $x$, $\Delta(x)$ increases as $x$ increases.
We can similarly show that for $x \in [\overline{x}, z+z^*]$, $\Delta(x)$ decreases as $x$ increases. Thus, $\Delta(x)$ is maximized at $x = \overline{x}$.

To complete the argument, we can transform $\ell_2(x)$ so that conjoining segments $\ell_1$ and $\ell_2$ resemble a budget-additive function. In particular, we can ``flatten'' $\ell_2(x)$ by decreasing its slope $v'_2 = 0$, which  decreases the value of $\ell_2(z+z^*)$ from $v_i(\overline{x}) + v'_2(w-\overline{x})$ to now be $v_i(\overline{x})$. Since $z$ remains fixed, the slope of $s(x)$ decreases when adjust for this change, which means the curvature (evaluated at $\overline{x}$) can only increase. Applying Lemma \ref{fact:budget-3/4}, it follows that $\mu_i \leq 4/3$.



\end{proof}

\bibliographystyle{plainurl}
\bibliography{bibliography}

\begin{thebibliography}{10}

\bibitem{anari2018nash}
Nima Anari, Tung Mai, Shayan~Oveis Gharan, and Vijay~V Vazirani.
\newblock Nash social welfare for indivisible items under separable,
  piecewise-linear concave utilities.
\newblock In {\em Proceedings of the Twenty-Ninth Annual ACM-SIAM Symposium on
  Discrete Algorithms}, pages 2274--2290. SIAM, 2018.

\bibitem{andelman2004auctions}
Nir Andelman and Yishay Mansour.
\newblock Auctions with budget constraints.
\newblock In {\em Scandinavian Workshop on Algorithm Theory}, pages 26--38.
  Springer, 2004.

\bibitem{azar2008improved}
Yossi Azar, Benjamin Birnbaum, Anna~R Karlin, Claire Mathieu, and C~Thach
  Nguyen.
\newblock Improved approximation algorithms for budgeted allocations.
\newblock In {\em International Colloquium on Automata, Languages, and
  Programming}, pages 186--197. Springer, 2008.

\bibitem{barman2018finding}
Siddharth Barman, Sanath~Kumar Krishnamurthy, and Rohit Vaish.
\newblock Finding fair and efficient allocations.
\newblock In {\em Proceedings of the 2018 ACM Conference on Economics and
  Computation}, pages 557--574, 2018.

\bibitem{barman2021approximating}
Siddharth Barman and Paritosh Verma.
\newblock Approximating nash social welfare under binary xos and binary
  subadditive valuations.
\newblock In {\em International Conference on Web and Internet Economics},
  pages 373--390. Springer, 2021.

\bibitem{caragiannis2019unreasonable}
Ioannis Caragiannis, David Kurokawa, Herv{\'e} Moulin, Ariel~D Procaccia,
  Nisarg Shah, and Junxing Wang.
\newblock The unreasonable fairness of maximum nash welfare.
\newblock {\em ACM Transactions on Economics and Computation (TEAC)},
  7(3):1--32, 2019.

\bibitem{chakrabarty2010approximability}
Deeparnab Chakrabarty and Gagan Goel.
\newblock On the approximability of budgeted allocations and improved lower
  bounds for submodular welfare maximization and gap.
\newblock {\em SIAM Journal on Computing}, 39(6):2189--2211, 2010.

\bibitem{chaudhury2018fair}
Bhaskar Chaudhury, Yun~Kuen Cheung, Jugal Garg, Naveen Garg, Martin Hoefer, and
  Kurt Mehlhorn.
\newblock On fair division of indivisible items.
\newblock {\em arXiv preprint arXiv:1805.06232}, 2018.

\bibitem{chen2009settling}
Xi~Chen, Decheng Dai, Ye~Du, and Shang-Hua Teng.
\newblock Settling the complexity of arrow-debreu equilibria in markets with
  additively separable utilities.
\newblock In {\em 2009 50th Annual IEEE Symposium on Foundations of Computer
  Science}, pages 273--282. IEEE, 2009.

\bibitem{cole2017convex}
Richard Cole, Nikhil Devanur, Vasilis Gkatzelis, Kamal Jain, Tung Mai, Vijay~V
  Vazirani, and Sadra Yazdanbod.
\newblock Convex program duality, fisher markets, and nash social welfare.
\newblock In {\em Proceedings of the 2017 ACM Conference on Economics and
  Computation}, pages 459--460, 2017.

\bibitem{cole2015approximating}
Richard Cole and Vasilis Gkatzelis.
\newblock Approximating the nash social welfare with indivisible items.
\newblock In {\em Proceedings of the forty-seventh annual ACM symposium on
  Theory of computing}, pages 371--380, 2015.

\bibitem{conforti1984submodular}
Michele Conforti and G{\'e}rard Cornu{\'e}jols.
\newblock Submodular set functions, matroids and the greedy algorithm: tight
  worst-case bounds and some generalizations of the rado-edmonds theorem.
\newblock {\em Discrete applied mathematics}, 7(3):251--274, 1984.

\bibitem{conitzer2017fair}
Vincent Conitzer, Rupert Freeman, and Nisarg Shah.
\newblock Fair public decision making.
\newblock In {\em Proceedings of the 2017 ACM Conference on Economics and
  Computation}, pages 629--646, 2017.

\bibitem{devanur2012online}
Nikhil~R Devanur and Kamal Jain.
\newblock Online matching with concave returns.
\newblock In {\em Proceedings of the forty-fourth annual ACM symposium on
  Theory of computing}, pages 137--144, 2012.

\bibitem{eisenberg1959consensus}
Edmund Eisenberg and David Gale.
\newblock Consensus of subjective probabilities: The pari-mutuel method.
\newblock {\em The Annals of Mathematical Statistics}, 30(1):165--168, 1959.

\bibitem{fain2018fair}
Brandon Fain, Kamesh Munagala, and Nisarg Shah.
\newblock Fair allocation of indivisible public goods.
\newblock In {\em Proceedings of the 2018 ACM Conference on Economics and
  Computation}, pages 575--592, 2018.

\bibitem{fluschnik2019fair}
Till Fluschnik, Piotr Skowron, Mervin Triphaus, and Kai Wilker.
\newblock Fair knapsack.
\newblock In {\em Proceedings of the AAAI Conference on Artificial
  Intelligence}, volume~33, pages 1941--1948, 2019.

\bibitem{garg2018approximating}
Jugal Garg, Martin Hoefer, and Kurt Mehlhorn.
\newblock Approximating the nash social welfare with budget-additive
  valuations.
\newblock In {\em Proceedings of the Twenty-Ninth Annual ACM-SIAM Symposium on
  Discrete Algorithms}, pages 2326--2340. SIAM, 2018.

\bibitem{garg2021tractable}
Jugal Garg, Edin Husi{\'c}, Aniket Murhekar, and L{\'a}szl{\'o} V{\'e}gh.
\newblock Tractable fragments of the maximum nash welfare problem.
\newblock {\em arXiv preprint arXiv:2112.10199}, 2021.

\bibitem{garg2021approximating}
Jugal Garg, Edin Husi{\'c}, and L{\'a}szl{\'o}~A V{\'e}gh.
\newblock Approximating nash social welfare under rado valuations.
\newblock In {\em Proceedings of the 53rd Annual ACM SIGACT Symposium on Theory
  of Computing}, pages 1412--1425, 2021.

\bibitem{garg2020approximating}
Jugal Garg, Pooja Kulkarni, and Rucha Kulkarni.
\newblock Approximating nash social welfare under submodular valuations through
  (un) matchings.
\newblock In {\em Proceedings of the fourteenth annual ACM-SIAM symposium on
  discrete algorithms}, pages 2673--2687. SIAM, 2020.

\bibitem{garg2001approximation}
Rahul Garg, Vijay Kumar, and Vinayaka Pandit.
\newblock Approximation algorithms for budget-constrained auctions.
\newblock In {\em Approximation, Randomization, and Combinatorial Optimization:
  Algorithms and Techniques}, pages 102--113. Springer, 2001.

\bibitem{kalaitzis2015configuration}
Christos Kalaitzis, Aleksander Madry, Alantha Newman, Luk{\'a}{\v{s}}
  Pol{\'a}{\v{c}}ek, and Ola Svensson.
\newblock On the configuration lp for maximum budgeted allocation.
\newblock {\em Mathematical Programming}, 154(1):427--462, 2015.

\bibitem{khot2005inapproximability}
Subhash Khot, Richard~J Lipton, Evangelos Markakis, and Aranyak Mehta.
\newblock Inapproximability results for combinatorial auctions with submodular
  utility functions.
\newblock In {\em International Workshop on Internet and Network Economics},
  pages 92--101. Springer, 2005.

\bibitem{li2021constant}
Wenzheng Li and Jan Vondr{\'{a}}k.
\newblock A constant-factor approximation algorithm for nash social welfare
  with submodular valuations.
\newblock {\em CoRR}, abs/2103.10536, 2021.
\newblock URL: \url{https://arxiv.org/abs/2103.10536}, \href
  {http://arxiv.org/abs/2103.10536} {\path{arXiv:2103.10536}}.

\bibitem{mcglaughlin2020improving}
Peter McGlaughlin and Jugal Garg.
\newblock Improving nash social welfare approximations.
\newblock {\em Journal of Artificial Intelligence Research}, 68:225--245, 2020.

\bibitem{mehta2013online}
Aranyak Mehta.
\newblock Online matching and ad allocation.
\newblock {\em Foundations and Trends{\textregistered} in Theoretical Computer
  Science}, 8(4):265--368, 2013.

\bibitem{plaut2020almost}
Benjamin Plaut and Tim Roughgarden.
\newblock Almost envy-freeness with general valuations.
\newblock {\em SIAM Journal on Discrete Mathematics}, 34(2):1039--1068, 2020.

\bibitem{Sviridenko2015OptimalAF}
M.~Sviridenko, J.~Vondr{\'a}k, and Justin Ward.
\newblock Optimal approximation for submodular and supermodular optimization
  with bounded curvature.
\newblock In {\em SODA}, 2015.

\bibitem{vazirani2011market}
Vijay~V Vazirani and Mihalis Yannakakis.
\newblock Market equilibrium under separable, piecewise-linear, concave
  utilities.
\newblock {\em Journal of the ACM (JACM)}, 58(3):1--25, 2011.

\bibitem{vondrak2008optimal}
Jan Vondr{\'a}k.
\newblock Optimal approximation for the submodular welfare problem in the value
  oracle model.
\newblock In {\em Proceedings of the fortieth annual ACM symposium on Theory of
  computing}, pages 67--74, 2008.

\bibitem{vondrak2010submodularity}
Jan Vondr{\'a}k.
\newblock Submodularity and curvature: The optimal algorithm (combinatorial
  optimization and discrete algorithms).
\newblock {\em RIMS Kokyuroku Bessatsu}, 23:253--266, 2010.

\end{thebibliography}


\end{document}